\newtheorem{theorem}{Theorem}
\newtheorem{corollary}[theorem]{Corollary}
\newtheorem{proposition}[theorem]{Proposition}
\newtheorem{lemma}[theorem]{Lemma}
\newtheorem{remark}[theorem]{Remark}
\newtheorem{property}[theorem]{Property}
\newtheorem{definition}[theorem]{Definition}
\newcommand{\mch}[2]{
\left.\mathchoice
  {\left(\kern-0.48em\binom{#1}{#2}\kern-0.48em\right)}
  {\big(\kern-0.30em\binom{\smash{#1}}{\smash{#2}}\kern-0.30em\big)}
  {\left(\kern-0.30em\binom{\smash{#1}}{\smash{#2}}\kern-0.30em\right)}
  {\left(\kern-0.30em\binom{\smash{#1}}{\smash{#2}}\kern-0.30em\right)}
\right.}
\newcommand{\wa}{{\sc Weighting}\xspace}
\newcommand{\acc}{\ensuremath{\textrm{acc}}\xspace}
\newcommand{\search}{\textsc{Search}$(n,k,H)$\xspace}
\begin{document}
\title{Competitive Sequencing with Noisy Advice}

\author[1]{Spyros Angelopoulos\thanks {Corresponding author: {email: {\tt spyros.angelopoulos@lip6.fr}}}}
\author[2]{Diogo Ars\'enio} 
\author[3]{Shahin Kamali}

\date{}

\affil[1]{CNRS and Sorbonne University, Paris, France.
} 
\medskip

\affil[2]{New York University Abu Dhabi, Abu Dhabi, UAE. 
}

\affil[3]{University of Manitoba, Winnipeg, Canada. 
}

\maketitle

\begin{abstract}
Several well-studied online resource allocation problems can be formulated in terms of infinite, increasing sequences of positive values, in which each element is associated with a corresponding allocation value. Examples include problems such as online bidding, searching for a hidden target on an unbounded line, and designing interruptible algorithms based on repeated executions. The performance of the online algorithm, in each of these problems, is measured by the competitive ratio, which describes the multiplicative performance loss due to the absence of full information on the instance.

We study such competitive sequencing problems in a setting in which the online algorithm has some (potentially) erroneous information, 
expressed as a $k$-bit advice string, for some given $k$. 
We first consider the {\em untrusted} advice setting of [Angelopoulos et al, ITCS 2020], in which the objective is to quantify performance considering two extremes: either the advice is either error-free, or it is generated by a 
(malicious) adversary. Here, we show a Pareto-optimal solution, using a new approach for applying the functional-based lower-bound technique due to [Gal, Israel J. Math. 1972]. Next, we study a nascent {\em noisy} advice setting, in which a number of the advice bits may be erroneous; the exact error is unknown to the online algorithm, which only has access to a pessimistic estimate (i.e., an upper bound on this error). We give improved upper bounds, but also the first lower bound on the competitive ratio of an online problem in this setting. To this end, we combine ideas from robust query-based search in arrays, and fault-tolerant contract scheduling. Last, we demonstrate how to apply the above techniques in robust optimization without predictions, and discuss how they can be applicable in the context of more general online problems.
\end{abstract}

\newpage

\section{Introduction}
\label{sec:introduction}

Online computation, and competitive analysis, in particular, has served as the definitive framework for the theoretical analysis of algorithms in a state of uncertainty. While the early, standard definition of online computation~\cite{SleTar85} assumes that the algorithm has no knowledge in regards to the request sequence, in practical situations the algorithm will indeed have certain limited, but possibly inaccurate such information (e.g., some lookahead, or historical information on typical sequences). Hence the need for more nuanced models that capture the power and limitations of online algorithms enhanced with external information. 

One such approach, within Theoretical Computer Science (TCS), is the framework of {\em advice complexity}; see~\cite{DobKraPar09,BocKomKra09,EmekFraKorRos2011}, the survey~\cite{Boyar:survey:2016} and the book~\cite{komm2016introduction}. In the advice-complexity model (and in particular, the {\em tape} model~\cite{DBLP:journals/jcss/BockenhauerKKK17} of advice complexity), the online algorithm receives a string that encodes information concerning the request sequence, and which can help improve its performance.
The objective is to quantify the trade-offs between the size of advice (in terms of number of bits), and the competitive ratio of the algorithm. This model places stringent requirements: the advice is assumed to be {\em error free}, and may be provided by an {\em omnipotent oracle}. As such, this model is mostly of theoretical significance, with very limited applications in practice.

To address these limitations,~\cite{DBLP:conf/innovations/0001DJKR20} introduced a model in which the advice can be erroneous, termed the {\em untrusted advice} model. Here, the algorithm's performance is evaluated at two extreme situations, in regards to the advice error. At the one extreme, the advice is error-free, in which case the competitive ratio is termed the {\em consistency} of the algorithm. At the other extreme,  the advice is generated by a (malicious) adversary who aims to maximize the performance degradation of the algorithm, in which case the competitive ratio is termed the {\em robustness} of the algorithm. The objective is to identify strategies that are {\em Pareto-efficient}, and ideally {\em Pareto-optimal}, for these two extreme measures.  Several online problems have been studied recently within this framework of Pareto-optimality; see, e.g.,~\cite{wei2020optimal,li2021robustness,lee2021online,DBLP:conf/innovations/000121, DBLP:conf/aaai/0001K21}. Note also that the definitions of consistency and robustness are borrowed from recent advances on {\em learning-augmented} online algorithms and online algorithms with {\em predictions}, as shown first in~\cite{DBLP:conf/icml/LykourisV18} and~\cite{NIPS2018_8174}. See also the survey~\cite{mitzenmacher2020algorithms}.

\subsection{Online computation with noisy advice}

In this work we focus on a nascent model of advice, in which the advice string can be {\em noisy}. The starting observation is that the Pareto-based framework of untrusted advice only focuses on the tradeoff between the extreme competitive ratios, namely the consistency and the robustness. A more general question is to evaluate the performance of an online algorithm under the assumption that only up to a certain number of the advice bits may be erroneous. Given an advice string of size $k$, we denote by $\eta\leq k$ the number of 
erroneous bits. The objective is to study the power and limitations of online algorithms within this setting, i.e., from the point of view of both upper and lower bounds on the competitive ratios.

Naturally, the algorithm does not know $\eta$. We will assume, however, that the algorithm has knowledge of an {\em upper bound} $H$ on the error, i.e., the algorithm knows that $\eta\leq H\leq k$.
We further distinguish between two possible models: The first, and perhaps more natural model, treats $H$ as an ``absolute'' upper bound on the error, in the sense that there is no requirement that the algorithm performs well if the error happens to exceed $H$. We call this 
the {\em standard} model of noisy advice, and we often omit the word standard when referring to it. The second model treats $H$ as an ``anticipated'' upper bound, and enforces the requirement that the online algorithm performs well not only in the case 
$\eta\leq H$, but should also satisfy some performance guarantees if it so happens that $\eta>H$ (and in particular, if $\eta$ is adversarial).
The objective here is to quantify the tradeoff between these two objectives. 
We call this model the {\em robust noisy} model. As we will discuss, our analysis allows us to obtain results for both models. 

Note that the noisy advice becomes even more pertinent if each advice bit is a response to a {\em binary query} concerning the input. While the noisy model, as defined above, does not impose such a restriction on the nature of advice, this aspect can be useful in realistic settings in which there is a prediction in the form of response to $k$ binary queries. To our knowledge, the noisy advice model (in particular, its binary query-based interpretation) has only been applied to the problem of {\em contract scheduling}~\cite{DBLP:conf/aaai/0001K21}, and solely from the point of view of upper bounds. This problem belongs in a larger class of problems, which we discuss below.

\subsection{Competitive sequencing and applications}
\label{subsec:contract.scheduling}

Consider the following simple problem: Define a sequence\footnote{Since we will often use modulo arithmetic in this work, we will use indexing starting with ``0'' instead of ``1''. } of increasing, positive numbers  
$X=(x_i)_{i=0}^\infty$ which minimizes the quantity
\begin{equation}
\rho_X = \sup_{i \geq 0} \frac{\sum_{j=0}^i x_j}{x_{i-1}},
\label{eq:bidding}
\end{equation}
where $x_{-1}$ is defined to be equal to 1. 

In TCS, this problem is known as {\em online bidding}~\cite{ChrKen06} and is used as a vehicle for formalizing efficient doubling strategies for certain online and offline optimization problems. Here, each $x_i$ can be thought as a ``bid''; there is also a hidden (unknown) value $u$. The algorithm must define a sequence of bids, which minimizes the worst-case ratio of the cost of the algorithm over the value $u$, where the cost is defined as the sum of bids in the sequence up and including the first bid that is at least as high as $u$. Note that this worst-case ratio is maximized if $u$ is chosen to be infinitesimally larger than a bid, hence the worst-case ratio is given precisely by~\eqref{eq:bidding}, which is called the {\em competitive ratio} of the bidding strategy. The problem is also related to {\em searching on the line} (i.e., on two concurrent branches with a common origin) for a hidden target~\cite{bellman,beck:yet.more,yates:plane}, a problem with a very rich history of study in TCS and Operations Research. More precisely, $1+2\rho_X$ is the competitive ratio of a search strategy based on the sequence $X$, in which, in round $i$, the searcher explores the branch 
$i \mod 2$ of the line up to length equal to $x_i$.

In AI, competitive sequencing is the formulation of another important resource allocation problem known as {\em contract scheduling}~\cite{RZ.1991.composing}. Here, the objective is to allocate computational time to an algorithm so that it outputs a reasonably efficient solution, even if interrupted during its execution. More precisely, let $A$ be an algorithm which may return a meaningless solution if interrupted at some point prior to the end of its execution (e.g., a PTAS based on dynamic programming); such algorithms are known as {\em contract algorithms}. We would like to obtain an algorithm based on $A$ that is, instead, robust to interruptions, and to this end, we repeatedly execute $A$ with increasing runtimes, as dictated by the sequence $X$. In other words, $x_i$ is the runtime (or {\em length}) of the $i$-th execution in what we call a {\em schedule} of (executions) of contract algorithms. 
Suppose that an interruption occurs at time $T$, and let $\ell(X,T)$
denote the longest execution that completed by time $T$. The worst-case ratio $\sup_T T/\ell(X,T)$ is known as the {\em acceleration ratio} of the schedule that is based on the sequence $X$~\cite{RZ.1991.composing}. Since the acceleration ratio is maximized for interruptions that occur infinitesimally prior to the completion of a contract algorithm, it follows that the acceleration ratio is precisely equal to 
$\rho_X$, as expressed in~\eqref{eq:bidding}. 

For all the above problems, optimizing $\rho_X$ implies optimizing their corresponding, worst-case performance guarantee. It has long been known that the doubling sequence $X=(2^i)_{i=0}^\infty$ is optimal, and $\rho_X=4$ is best-possible. Beyond these simple settings, some of the above problems, and most notably contract scheduling, have been studied under more complex settings, for which optimizing the corresponding performance measure becomes much more challenging. This includes settings in which one must schedule executions of the contract algorithm for several problem instances, as well as the setting in which the executions can be scheduled in several parallel processors, see, e.g.,~\cite{BPZF.2002.scheduling,ZilbersteinCC03,steins,aaai06:contracts,soft-contracts,ALO:multiproblem,DBLP:conf/ijcai/0001J19,kupavskii2018lower}.
All these works show theoretical upper and lower bounds on the performance of the corresponding schedules.

To our knowledge, the only previous work on noisy advice for an online problem is in the context of contract scheduling. More specifically,~\cite{DBLP:conf/aaai/0001K21} gave a schedule for the robust noisy model which uses the responses to the $k$ binary queries so as to select the best schedule from a space of $k$ appropriately defined schedules. The same work also gave a schedule is that is Pareto-optimal for the untrusted advice model, only for the case that we require that the robustness is equal to 4.

\subsection{Contributions}
\label{subsec:contributions}

In this work we study the power, but also the limitations of algorithms for competitive sequencing problems with noisy advice. In particular, we give new approaches that allow us to show the first lower bounds on the competitive ratio of algorithms in this setting. For concreteness, we will follow the abstraction of the contract scheduling problem, since it has the most intuitive statement among sequencing problems, but also has a significant body of precious work. We emphasize that our overall approach and techniques should carry over to other online problems mentioned in Section~\ref{subsec:contract.scheduling}. 

 We begin with the untrusted advice model: here, we show a tight lower bound that establishes the Pareto optimality of the schedule given in~\cite{DBLP:conf/aaai/0001K21} for all possible values of robustness $r\geq 4$, and any size $k$ of the advice string (Section~\ref{sec:tradeoff}). This also helps us establish properties that will be useful in the study of the noisy model. To achieve this result, we introduce a new application of Gal's {\em functional theorem}~\cite{gal:general}. This is the main tool for showing lower bounds in search theory but also in contract scheduling: see, e.g., Chapters 7 and 9 in~\cite{searchgames} and~\cite{alex:robots,schuierer:lower}, for its applications in search theory, as well as~\cite{aaai06:contracts,soft-contracts,ALO:multiproblem,DBLP:conf/ijcai/0001J19} for its applications in contract scheduling. Note that all previous work has applied this theorem in a ``black-box'' manner: specifically, a typical application of this theorem gives a lower bound to the performance of the algorithms, as a function of a parameter  $\alpha_X$, that depends on the sequence $X$ (see Theorem~\ref{thm:gal} for details); say that $f(\alpha_X)$ is this lower bound expression. The next step is then to find a specific solution (usually a simple, geometric sequence of the form $(b^i)_{i=0}^\infty$, for some chosen $b>1$) with competitive ratio at most $f(b)$, where $f$ must crucially be the same function as the one that gives the lower bound. Choosing $b$ so as to minimize $f(b)$ yields an optimal result, and note that neither the specific range of $\alpha_X$, nor $\alpha_X$ itself are, in a sense, significant in this standard approach. In our work, instead, due to the inherent complexity of the solution $X$ (since it is determined by the advice bits), the above approach does not apply directly. To bypass this difficulty, we first give a lower bound of the consistency of any $r$-robust schedule with $k$ advice bits by treating it a ``virtual'' multi-processor problem using $2^k$ searchers, and by defining a suitable parameter $\alpha_X$ that makes the application of Gal's theorem possible. But more importantly, one needs to have a close look at this parameter, and limit the range of $\alpha_X$, in order to obtain a tight bound. The crucial part is then to give upper and lower bounds on $\alpha_X$, in terms of $k$ and $r$ 
(see Theorem~\ref{thm:zetas} and Corollary~\ref{cor:merge}).

Our next contribution concerns the setting of noisy advice. For the upper bound, we combine two ideas: our Pareto-optimal schedule for untrusted advice as well as a new algorithm for robust search in arrays, which can be seen as a problem ``dual'' to the one studied in~\cite{RivestMKWS80}. Here we show how results based on R\'enyi-Ulam types of games can be useful in the context of online computation with noisy advice. The resulting solution improves significantly in comparison to the upper bound of~\cite{DBLP:conf/aaai/0001K21}. In particular, we show that the performance of our schedule improves as function of $k$, whereas the schedule~\cite{DBLP:conf/aaai/0001K21} has performance independent of $k$.

We then turn our attention to lower bounds in the noisy advice setting. Here, we combine again two ideas. First, we abstract any schedule with noisy advice as a selection from a space ${\cal X}$ of candidate schedules, say of size $l$: given an ordering $X_0, \ldots, X_{l-1}$ 
of the schedules in ${\cal X}$ (unknown to the algorithm), in decreasing order of performance, any schedule with noisy advice will choose one of them, say $X_j$. We give a lower bound on this index $j$ using again ideas from robust search in arrays, under the assumption that the advice bits are responses to subset queries. Next, we need to establish a {\em measure} that relates the inefficiency of the chosen schedule of rank $j$ relative to the optimal one, i.e., the schedule $X_0$. We accomplish this by relating this problem to multi-processor contract scheduling with {\em fault tolerance} (without any advice).

In Section~\ref{sec:ft} we show how our techniques can be applied to problems outside the advice setting. In particular, we introduce and study a {\em robust} version of {\em fault-tolerant} 
contract scheduling. In the original problem~\cite{kupavskii2018lower}, the objective is to devise a contract schedule over $p$ multiple processors of optimal acceleration ratio, such that up to $f$ processor faults can be tolerated, for given $f<p$. In our problem, we additionally require that the schedule performs well even if all but a single processor are faulty. We combine ideas from our analysis in the untrusted and noisy settings, so as to obtain a schedule with an optimal tradeoff between these two objectives. 


\section{Preliminaries}
\label{sec:preliminaries}

As explained above, we adopt the abstraction of contract scheduling.
We give some useful definitions, following the notation of~\cite{DBLP:conf/aaai/0001K21}. 
A contract schedule $X$ is a sequence $(x_i)_{i=0}^\infty$, with $x_i \in \mathbb{R}^+$, and 
$x_{i+1} >x_i$, for all $i$. We call $x_i$ the {\em length} of {\em contract} $i$ in $X$. 
We will denote by $T$ the interruption time, and we make the standard assumption that an interruption can only occur after the first contract has completed its execution, since no schedule can have finite acceleration ratio otherwise. 

In the absence of advice, the worst-case acceleration ratio of $X$ is given by~\eqref{eq:bidding}. It is easy to see that the worst-case interruptions occur infinitesimally prior to the completion of a contract. In the presence of advice, the acceleration ratio of $X$ is equal to $T/\ell(X,T)$, as defined in Section~\ref{subsec:contract.scheduling}. The {\em consistency} of $X$ is the acceleration ratio of $X$ assuming no advice error, i.e., $\eta=0$, whereas its {\em robustness} is the acceleration ratio assuming adversarial advice. Note, in particular, that the robustness of a schedule is equal to its acceleration ratio without advice, and we will use these two terms interchangeably.  A schedule is called {\em Pareto-optimal}, if its consistency and robustness are in a Pareto-optimal relation. 
We call a schedule of robustness at most $r$, {\em $r$-robust}  (similarly for consistency).

We call a schedule {\em geometric with base $b>1$} if it is of the form $(b^i)_{i=0}^\infty$, and we denote it by $G_b$. Geometric schedules are important since they often lead to optimal solutions to many contract scheduling variants. It is known that the 
robustness of $G_b$ is equal to $b^2/(b-1)$~\cite{steins}.  This expression is minimized for $b=2$, thus $G_2$ has optimal acceleration ratio equal to 4. 
For any $r\geq 4$ define $\zeta_{1,r}$ and $\zeta_{2,r}$ as the smallest and largest roots of the function $\frac{x^2}{x-1}-r$. This definition implies the following property:

\begin{property}
 The schedule $G_b$ with $b>1$ has acceleration ratio at most $b^2/(b-1)$. In particular, for any $r \geq 4$, and 
 $b \in [\zeta_{1,r}, \zeta_{2,r}]$, $G_b$  has acceleration ratio at most $r$.
\label{prop:roots}
\end{property}
We have the following closed forms for the two roots:
\begin{equation}
\zeta_{1,r}=\frac{r-\sqrt{r^2-4r}}{2} \quad \textrm {and } \ \quad
\zeta_{2,r}=\frac{r+\sqrt{r^2-4r}}{2}
\end{equation}

All the above definitions apply to the single processor setting. In our work, we will show and exploit connections
between contract scheduling in a single processor with advice of size $k$, and contract scheduling without advice in $2^k$ processors. 
Hence, we present some definitions and notation concerning the setting of 
$p>1$ identical processors, labeled from the set $\{0, \ldots ,p-1\}$. In a contract schedule over this set of processors, each 
processor $j$ is assigned a sequence of contracts of the form   
$X_j=(x_{i,j})_{i=0}^\infty$, with $x_{i+1,j} > x_{i,j}$, for all $i$. We call the set 
$X=\{X_j\}_{j=0}^{p-1}$ a {\em $p$-processor} schedule, or equivalently, we say that $X$ is {\em defined by the set $\{X_j\}_{j=0}^{p-1}$}. The acceleration ratio of a $p$-processor schedule is defined as the worst-case ratio of the interruption time $T$, divided by the length of the longest contract completed by time $T$ among {\em all} processors. 

Let $X=(x_i)_{i=0}^\infty$ denote a sequence of positive numbers. We define $\alpha_X$ as the upper limit
\[
\alpha_X =\limsup_{n \rightarrow \infty} x_n^{1/n}.
\]
This quantity appears crucially in the statement of Gal's theorem. Last, given a set $X$ of positive reals, we define by $\bar{X}$ the sequence of all elements in $X$ in non-increasing order.

\section{Pareto-optimal contract scheduling with untrusted advice}
\label{sec:tradeoff}

In this section we show a tight lower bound on the consistency of any $r$-robust schedule with advice of size $k$, in the untrusted advice model.
We first give an overview of the proof. The starting observation is that any $r$-robust schedule $X$ with $k$ advice bits must be selected from a set ${\cal X}$ of at most $2^k$ $r$-robust schedules. Moreover, the consistency of $X$ is the acceleration ratio of the best schedule in ${\cal X}$, or, equivalently, the acceleration ratio of the $2^k$-processor schedule defined by ${\cal X}$. In Lemma~\ref{lemma:multi-search.alpha} we give a lower bound on this acceleration ratio as function of the parameter $\alpha_{\bar{\cal X}}$. Here, $\bar{\cal X}$ is defined as the sequence that consists of all contract lengths of schedules in ${\cal X}$, in non-decreasing order.
Next, we show that $\alpha_{\bar{\cal X}}$ must be within a certain range, namely within 
$[\zeta_{1,r}^{1/2^k}, \zeta_{2,r}^{1/2^k}]$ Corollary~\ref{cor:merge}). This is accomplished by first showing upper and lower bounds on the contract lengths of any $r$-robust schedule (Theorem~\ref{thm:zetas}). Combining the above yields the lower bound. The tightness of the result will follow by directly comparing to the schedule (upper bound) of~\cite{DBLP:conf/innovations/000121}.

We proceed with the technical results. The following lemma is a special case of a more result that we prove 
in Section~\ref{subsec:noisy.lower}, namely Theorem~\ref{thm:multi-search.alpha.faulty}.

\begin{lemma}
Let $X$ be a $p$-processor schedule, as defined by a set ${\cal X}$ of $2^k$ schedules, of finite
acceleration ratio. Then
\[
\acc(X) \geq \frac{\alpha_{\bar{\cal X}}^{2^k+1}}{\alpha_{\bar{\cal X}}^{2^k}-1}.
\]
\label{lemma:multi-search.alpha}
\end{lemma}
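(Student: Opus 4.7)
The plan is to reduce $\acc(X)$ to a functional of the merged sequence and then invoke Gal's theorem. Write $\bar{\mathcal{X}} = (y_n)_{n=0}^\infty$ for the non-decreasing merge of the contract lengths of the $2^k$ schedules in $\mathcal{X}$, and let $\sigma(n) \in \{0, \dots, 2^k - 1\}$ denote the processor that executes $y_n$. Then the completion time of $y_n$ on its processor is $T_n = \sum_{i \le n,\ \sigma(i) = \sigma(n)} y_i$.

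First I would derive a per-index lower bound on $\acc(X)$. At the interruption time $T_n - \epsilon$, contract $y_n$ has not yet completed on processor $\sigma(n)$; moreover, by sortedness of $\bar{\mathcal{X}}$, no contract has length strictly between $y_{n-1}$ and $y_n$. Hence the longest completed contract across all processors at this time has length at most $y_{n-1}$, yielding $\acc(X) \ge T_n / y_{n-1}$ for every $n \ge 1$.

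Next, I would argue that the round-robin assignment $\sigma(n) = n \bmod 2^k$ is, without loss of generality, the adversary's optimal choice for minimizing $\sup_n T_n / y_{n-1}$, via an exchange argument on finite prefixes: if two indices $i < j$ with $j - i < 2^k$ share a processor, re-assigning one of them to a less-loaded processor can only decrease the maximum ratio. Under round-robin the functional becomes $T_n = \sum_{l \ge 0} y_{n - l \cdot 2^k}$ (with out-of-range terms set to zero), a function of $(y_n)$ alone.

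Finally, I would apply Gal's functional theorem to this round-robin functional. Evaluating on the geometric test sequence $y_n = \alpha^n$ with $\alpha = \alpha_{\bar{\mathcal{X}}}$ gives, in the limit $n \to \infty$, $T_n / y_{n-1} \to \alpha^{2^k + 1}/(\alpha^{2^k} - 1)$; Gal's theorem then lifts this to any sequence with the same $\alpha_{\bar{\mathcal{X}}}$, which is exactly the stated bound. I expect the main technical obstacle to be making the exchange argument rigorous while preserving compatibility with the $\limsup$ defining $\alpha_{\bar{\mathcal{X}}}$; the cleanest route is to prove the exchange lemma on finite prefixes and pass to the limit via a diagonal argument before invoking Gal's theorem.
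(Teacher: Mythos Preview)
Your per-index inequality in step 2 is not justified. The claim ``at time $T_n-\epsilon$ the longest completed contract has length at most $y_{n-1}$'' uses only that $y_n$ has not yet finished on processor $\sigma(n)$, but nothing prevents a longer contract $y_m$ with $m>n$ from having already finished on a \emph{different} processor. For a concrete failure, take $2^k=2$ with processor $0$ running lengths $100,101$ and processor $1$ running $1,102$; then $y_2=101$ on processor $0$ has $T_2=201$, yet at time $T_2-\epsilon$ the contract $y_3=102$ on processor $1$ finished at time $103$, so $\ell(X,T_2-\epsilon)=102>y_1=100$. Sortedness of $\bar{\mathcal X}$ is about lengths, not about completion times. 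This breaks the derivation of $\acc(X)\ge T_n/y_{n-1}$, and with it the motivation for the exchange/round-robin reduction; that reduction itself is also delicate, since swapping two assignments can raise some $T_n$ while lowering others, so monotonicity of the sup is not obvious.

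The paper (this lemma is the case $\phi=0$ of Theorem~\ref{thm:multi-search.alpha.faulty}) sidesteps both difficulties by an averaging trick rather than an exchange argument. Fix a completion time $t$ and, for each processor $m$, let $I_m$ be the contracts finished on $m$ by $t$ and $d_{j_m}$ the length of the longest contract completed anywhere just before the last element of $I_m$ finishes. Each ratio $\sum_{i\in I_m}\bar x_i/d_{j_m}$ lower-bounds $\acc(X)$, and the elementary inequality $\max_m a_m/b_m\ge(\sum_m a_m)/(\sum_m b_m)$ collapses these into a single functional of the merged sequence: the numerator is the full partial sum $\sum_{i\le q+1}\bar x_i$ (all short contracts are finished somewhere), and the denominator is bounded by the window $\sum_{i=q-2^k+1}^{q}\bar x_i$. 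Gal's theorem then applies directly to this functional and yields $\alpha^{2^k+1}/(\alpha^{2^k}-1)$. No assignment-by-assignment comparison is needed.
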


In the next step, we show that $\zeta_{2,r}^i$ and $\zeta_{1,r}^i$ are (roughly) 
upper and lower bounds on the lengths of any $r$-robust schedule. 
\begin{theorem}[Appendix]
Let $X=(x_i)_{i=0}^\infty$  be an $r$-robust schedule, for some fixed, finite $r\geq 4$. Then
\[
x_i\leq   c\cdot \zeta_{2,r}^i \ \textrm{and } x_i \geq  d \cdot \zeta_{1,r}^i,
\]
if $r>4$. Moreover, 
\[
x_i\leq   c\cdot i \cdot 2^i \ \textrm{and } x_i \geq  d  \cdot 2^i,
\]
if $r=4$, where $c$,$d$ are only functions of $r$. 
\label{thm:zetas}
\end{theorem}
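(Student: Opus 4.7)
The plan is to exploit the linear structure hidden in the $r$-robustness constraint. Letting $S_i = \sum_{j=0}^i x_j$ and using the identity $x_{i-1} = S_{i-1} - S_{i-2}$ for $i \geq 2$, the condition $S_i \leq r \cdot x_{i-1}$ becomes the second-order difference inequality $S_i - r S_{i-1} + r S_{i-2} \leq 0$. Its characteristic polynomial $\lambda^2 - r\lambda + r$ has roots precisely $\zeta_{1,r}$ and $\zeta_{2,r}$ by Property~\ref{prop:roots}. This algebraic coincidence drives everything.

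For the upper bound, I would factor the inequality as $(S_i - \zeta_{1,r} S_{i-1}) \leq \zeta_{2,r}(S_{i-1} - \zeta_{1,r} S_{i-2})$. Setting $U_i := S_i - \zeta_{1,r} S_{i-1}$ yields $U_i \leq \zeta_{2,r} U_{i-1}$, hence $U_i \leq U_1 \zeta_{2,r}^{i-1}$. Since the $r$-robustness at indices $0$ and $1$ forces $x_0 \leq r$ and $x_1 \leq (r-1)x_0$, the quantity $U_1$ is bounded by a function of $r$ alone. Substituting back gives the affine recurrence $S_i \leq \zeta_{1,r} S_{i-1} + O(\zeta_{2,r}^{i-1})$ (with the hidden constant a function of $r$), which I would solve by the standard particular-plus-homogeneous method, using the geometric ansatz $A \zeta_{2,r}^i$ for the particular solution. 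For $r > 4$ the roots are distinct and this gives $S_i \leq A \zeta_{2,r}^i + B \zeta_{1,r}^i = O(\zeta_{2,r}^i)$; since $x_i \leq S_i$ the desired bound on $x_i$ follows. For $r = 4$ the double root at $2$ introduces an extra factor of $i$ via a divide-by-$2^i$-and-telescope argument, giving $S_i = O(i \cdot 2^i)$.

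For the lower bound, the natural symmetric factoring $(S_i - \zeta_{2,r} S_{i-1}) \leq \zeta_{1,r}(S_{i-1} - \zeta_{2,r} S_{i-2})$, setting $V_i := S_i - \zeta_{2,r} S_{i-1}$, yields $V_i \leq \zeta_{1,r} V_{i-1}$; but this alone does not produce a growing lower bound on $x_i$, because $V_i$ may be negative and then iterating only shows $|V_i|$ grows geometrically in the wrong direction. My plan is to combine it with a tail argument: applying the upper bound just proved to the shifted schedule (which is itself $r$-robust relative to $x_{i-1}$), together with the fact that any infinite, strictly increasing, $r$-robust sequence has asymptotic growth rate $\alpha_X = \limsup_j x_j^{1/j} \geq \zeta_{1,r}$ (a standard consequence of Gal's functional theorem applied to the ratios $S_j/x_{j-1}$), forces $x_i$ to be at least $d(r) \cdot \zeta_{1,r}^i$; otherwise the tail's upper bound and the asymptotic lower bound would be incompatible. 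The $r = 4$ case is handled analogously with the double-root adjustment. The main obstacle is precisely this lower bound: the dual recurrence loses sign information, and converting an asymptotic growth-rate lower bound into a uniform pointwise statement---while controlling the gap between finite prefixes and the long-run behavior forced by $r$-robustness---is the delicate step.
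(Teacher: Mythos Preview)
Your upper-bound argument is correct and is essentially the same linear-recurrence analysis the paper carries out, only packaged differently: you factor the difference inequality $S_i-rS_{i-1}+rS_{i-2}\le 0$ via $U_i=S_i-\zeta_{1,r}S_{i-1}$, while the paper introduces a matrix recursion $(A_k,B_k)$ and diagonalizes it. Both exploit that $\zeta_{1,r},\zeta_{2,r}$ are the roots of $\lambda^2-r\lambda+r$, and both land on the same explicit bounds (including the $i\cdot 2^i$ degeneracy at $r=4$).

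The lower bound, however, has a genuine gap. Gal's theorem applied to $F_j(X)=S_j/x_{j-1}$ yields $r\ge \alpha_X^2/(\alpha_X-1)$, hence $\alpha_X\in[\zeta_{1,r},\zeta_{2,r}]$; but $\alpha_X$ is a $\limsup$, so this only tells you that \emph{some} subsequence of $x_n^{1/n}$ stays above $\zeta_{1,r}$. It says nothing about $\liminf x_n^{1/n}$, and even a $\liminf$ bound would not by itself give the uniform pointwise statement $x_i\ge d\,\zeta_{1,r}^i$. Your proposed remedy, bounding the tail via the upper bound applied to the shifted schedule, gives $x_{i+j}\le C\,x_i\,\zeta_{2,r}^j$; letting $j\to\infty$ this is compatible with $\alpha_X$ anywhere up to $\zeta_{2,r}$, so no contradiction arises from a single small $x_i$. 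The ``incompatibility'' you hope for is not there.

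The paper closes this gap with a clean trick you are missing. Fix $j$ and form the auxiliary sequence $x_0'=\sum_{k=0}^j x_k$, $x_i'=x_{j+i}$ for $i\ge 1$; one checks that $(x_i')$ again satisfies $\sum_{i=0}^n x_i'\le r\,x_{n-1}'$ for $n\ge 2$. Now apply the \emph{upper-bound} representation to this auxiliary sequence: since all $x_i'\ge 0$, the coefficient of the dominant term $\zeta_{2,r}^{n}$ in the explicit formula must be nonnegative, which forces $x_0'\le(\zeta_{2,r}-1)x_1'$, i.e.\ $x_{j+1}\ge(\zeta_{1,r}-1)\sum_{k=0}^j x_k$ (using $(\zeta_{2,r}-1)(\zeta_{1,r}-1)=1$). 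This pointwise inequality, valid for every $j$, then gives $x_{i+1}\ge(\zeta_{1,r}-1)\zeta_{1,r}^i x_0$ by a straightforward induction. The key idea is thus to feed the upper bound a \emph{collapsed} sequence and read off a sign constraint, rather than to appeal to asymptotics.
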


Note that the bounds of Theorem~\ref{thm:zetas} are tight. From Property~\ref{prop:roots}, any geometric schedule 
$G_b$ with base $b\in  [\zeta_{1,\rho}, \zeta_{2,\rho}]$ is $r$-robust.
We also obtain the following useful corollary, which follows from Theorem~\ref{thm:zetas} and the definition of $\alpha_{\bar X}$.

\begin{corollary}[Appendix]
Let $X$ be a $p$-processor schedule defined by the set ${\cal X}=\{X_0,X_1, \ldots ,X_{p-1}\}$, where each $X_j$ is an $r$-robust schedule, 
for a given $r\geq 4$. Then 
\[
\alpha_{\bar {\cal X}} \in [\zeta_{1, r}^{1/p}, \zeta_{2, r}^{1/p}].
\] 
\label{cor:merge}
\end{corollary}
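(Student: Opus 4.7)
The corollary should follow directly from Theorem~\ref{thm:zetas} combined with elementary counting. The strategy is to translate the pointwise bounds $d\,\zeta_{1,r}^{i}\leq x_{i,j}\leq c\,\zeta_{2,r}^{i}$ on each $r$-robust schedule $X_{j}$ into bounds on the counting function of the merged sequence $\bar{\cal X}$, and then invert those to bound the $n$-th term $\bar{x}_{n}$, whose asymptotic $n$-th root defines $\alpha_{\bar{\cal X}}$.

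Concretely, I would first fix notation: let $\bar{x}_{n}$ denote the $n$-th element of $\bar{\cal X}$ (indexed from zero, read in non-decreasing order since the quantity $\limsup x_{n}^{1/n}$ is only informative in that orientation), and for any threshold $V>0$ put $N(V)=\left|\{(i,j):x_{i,j}\leq V\}\right|$. By construction $\bar{x}_{n}\leq V$ iff $N(V)\geq n+1$, so a lower bound on $N$ turns into an upper bound on $\bar{x}_{n}$, and vice versa.

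For the upper bound $\alpha_{\bar{\cal X}}\leq \zeta_{2,r}^{1/p}$, I would invoke $x_{i,j}\leq c\,\zeta_{2,r}^{i}$: for each processor $j$, every index $i$ with $i\leq \log_{\zeta_{2,r}}(V/c)$ contributes a contract of length at most $V$, so summing over the $p$ processors gives $N(V)\geq p\log_{\zeta_{2,r}}(V/c)-O(p)$ for $V$ large. Setting this lower bound equal to $n+1$ and solving for $V$ yields $\bar{x}_{n}\leq c\,\zeta_{2,r}^{(n+1)/p+O(1)}$; taking $n$-th roots as $n\to\infty$ gives $\limsup \bar{x}_{n}^{1/n}\leq \zeta_{2,r}^{1/p}$. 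The lower bound $\alpha_{\bar{\cal X}}\geq \zeta_{1,r}^{1/p}$ would follow by a symmetric argument with the lower bound $x_{i,j}\geq d\,\zeta_{1,r}^{i}$: a contract of $X_{j}$ of length at most $V$ forces $i\leq \log_{\zeta_{1,r}}(V/d)$, so $N(V)\leq p\log_{\zeta_{1,r}}(V/d)+O(p)$, which combined with $N(\bar{x}_{n})\geq n+1$ rearranges to $\bar{x}_{n}\geq d\,\zeta_{1,r}^{(n+1)/p-O(1)}$.

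I do not anticipate a substantive obstacle. The only mildly delicate point is the boundary case $r=4$, where the upper bound of Theorem~\ref{thm:zetas} carries an extra polynomial factor $i$. Since $\zeta_{1,4}=\zeta_{2,4}=2$ and $i^{1/n}\to 1$, this factor is absorbed by the $n$-th root and the argument yields $\alpha_{\bar{\cal X}}=2^{1/p}$, consistent with both endpoints of the desired interval collapsing. The remaining care, concerning the absorption of additive $O(1)$ error terms and of the multiplicative constants $c,d$ (which depend only on $r$) into the $\limsup$, is routine.
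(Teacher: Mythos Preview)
Your proposal is correct and follows essentially the same route as the paper: the paper also introduces the counting functions $f_j(t)=|\{i:x_{i,j}\le t\}|$, bounds each one via the pointwise estimates of Theorem~\ref{thm:zetas}, sums over the $p$ sequences to control the rank of the merged element, and inverts to sandwich $\bar{x}_n$ between constant multiples of $\zeta_{1,r}^{n/p}$ and $\zeta_{2,r}^{n/p}$ before taking $n$-th roots. The only cosmetic difference is that the paper states the lemma for general bounds $c_A A^i\le x_{j,i}\le c_B\,i^R B^i$ and tracks the polynomial correction explicitly (it appears as a factor $e^{R\sqrt{n}}$ in the final estimate), whereas your treatment of the $r=4$ case is slightly informal but entirely in the same spirit.
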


We now give the statement and the proof of our lower bound. 

\begin{theorem}
For all $r\geq 4$, any $r$-robust schedule with untrusted advice of size $k$ has consistency at least
\begin{equation}
\min \frac{x^{2^{k}+1}}{x^{2^k}-1} \quad \textrm{subject to} \quad  \zeta_{1,r}^{1/2^k} \leq x \leq \zeta_{2,r}^{1/2^k}.
\label{eq:pareto.optimal}
\end{equation} 
\label{thm:lower.pareto}
\end{theorem}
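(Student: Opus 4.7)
The plan is to interpret any online algorithm with $k$ bits of untrusted advice as a choice function from advice strings to schedules, and then chain together the two structural results already in hand. Concretely, fix any $r$-robust scheme with advice of size $k$; it is determined by a family ${\cal X}=\{X_0,\ldots,X_{2^k-1}\}$ of single-processor schedules, one per advice string (padding with duplicates if fewer than $2^k$ distinct schedules appear). Each $X_j$ must itself be $r$-robust, because the adversary can supply the advice string $j$ and then choose the worst-case interruption time for $X_j$; otherwise the overall robustness guarantee would fail.

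I would next identify the consistency of the scheme with the acceleration ratio of the $2^k$-processor schedule $X$ defined by ${\cal X}$. When the advice is error-free the oracle, knowing the interruption time $T$, can route the algorithm to the $X_j$ maximizing $\ell(X_j,T)$, so the consistency equals $\sup_T T/\max_j \ell(X_j,T)$, which is by definition $\acc(X)$. Applying Lemma~\ref{lemma:multi-search.alpha} with $p=2^k$ then yields
\[
\textrm{consistency}\;\geq\;\frac{\alpha_{\bar{\cal X}}^{2^k+1}}{\alpha_{\bar{\cal X}}^{2^k}-1}.
\]

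Finally, Corollary~\ref{cor:merge} applied to ${\cal X}$ (whose components are all $r$-robust by the first step) localizes $\alpha_{\bar{\cal X}}$ to the interval $[\zeta_{1,r}^{1/2^k},\zeta_{2,r}^{1/2^k}]$. Minimizing the right-hand side of the display above over this interval produces exactly the optimization~\eqref{eq:pareto.optimal}, and hence the claimed lower bound. The only real subtlety lies in the first step, namely justifying that each $X_j$ is itself $r$-robust; once this is in place, the rest is an immediate composition of Lemma~\ref{lemma:multi-search.alpha} and Corollary~\ref{cor:merge}, with the constraint on $\alpha_{\bar{\cal X}}$ narrowing the minimization just enough that the resulting bound is tight against the matching upper bound of~\cite{DBLP:conf/innovations/000121}.
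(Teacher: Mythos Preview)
Your proposal is correct and follows essentially the same route as the paper: reduce to a family of at most $2^k$ schedules each of which must be $r$-robust, identify consistency with the acceleration ratio of the corresponding $2^k$-processor schedule, and then compose Lemma~\ref{lemma:multi-search.alpha} with Corollary~\ref{cor:merge} to obtain the constrained minimization~\eqref{eq:pareto.optimal}. Your write-up is slightly more explicit about why consistency equals the multi-processor acceleration ratio and about padding ${\cal X}$ to full size, but there is no substantive difference in the argument.
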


\begin{proof}
Any schedule $X$ with advice of size $k$ will choose a schedule among a set of at most $2^k$ schedules, say ${\cal X}=\{X_0, \ldots X_{2^k-1}\}$. For $X$ to be $r$-robust, it must be that each $X_i$, with $i \in [0, 2^k-1]$ is likewise $r$-robust, otherwise maliciously generated advice would chose a schedule of robustness greater than $r$. 

Note that the consistency of $X$ is identical to the acceleration ratio of the $2^k$-processor schedule that is defined by ${\cal X}$. Let $X'$ denote this multi-processor schedule. From Lemma~\ref{lemma:multi-search.alpha}, we have that $\acc(X') \geq \frac{\alpha_{\bar{\cal X}}^{2^k+1}}{\alpha_{\bar{\cal X}}^{2^k}-1}$. Last, since every single-processor schedule $X_i$ must be $r$-robust, from Corollary~\ref{cor:merge} it follows that
$\alpha_{\bar{\cal X}} \in [ \zeta_{1,r}^{1/2^k}, \zeta_{2,r}^{1/2^k} ]$. 
\end{proof}

We now argue that our lower bound is tight. Consider the following set of schedules (see Figure~\ref{fig:cyclic} for an illustration).

\begin{definition}
For given $b>1$, and $l \in \mathbb{N}^+$ define ${\cal X}_{b,l}$ as the set of schedules
$\{X_0, \ldots X_{l-1}\}$, in which $X_i=(b^{i+jl})_{j=0}^\infty$, for all $i\in [0,l-1]$.
\label{def:bunch.old}
\end{definition}

\begin{figure}
	\centering
	\includegraphics[scale=.3]{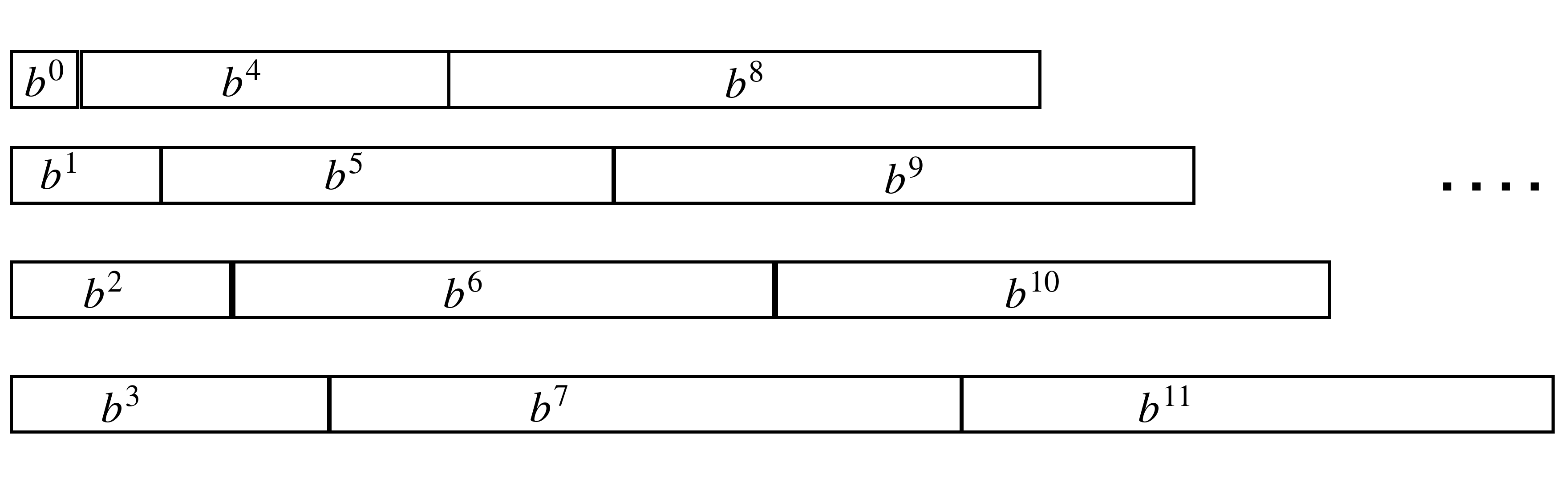}
	\caption{Illustration of the schedule ${\cal X}_{b,4}$.}
	\label{fig:cyclic}
\end{figure}

The following upper bound is from~\cite{DBLP:conf/aaai/0001K21}, which we restate in a way that will be useful in our study of the noisy model, in the next section.

\begin{theorem}
We can find $b>1$ such that: i) each schedule in ${\cal X}_{b,2^k}$ is $r$-robust; and ii) the best schedule in ${\cal X}_{b,2^k}$ has consistency that matches the lower bound of Theorem~\ref{thm:lower.pareto}; i.e., this schedule is Pareto-optimal for untrusted advice of size $k$. 
\label{thm:pareto.upper}
\end{theorem}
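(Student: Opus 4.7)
The plan is to pick $b$ as a minimizer of the function $f(x):=x^{2^k+1}/(x^{2^k}-1)$ on the interval $[\zeta_{1,r}^{1/2^k},\zeta_{2,r}^{1/2^k}]$ (the minimum is attained since $f$ is continuous on a compact interval) and then to verify that ${\cal X}_{b,2^k}$ satisfies both (i) and (ii). Setting $l:=2^k$ for brevity, item (i) reduces to bounding the single-processor acceleration ratio of each $X_i=(b^{i+jl})_{j\ge 0}$, while (ii) reduces to computing the acceleration ratio of the $l$-processor schedule defined by ${\cal X}_{b,l}$, since this latter quantity coincides with the consistency of the advice-driven scheme that uses noise-free advice to select the best element of ${\cal X}_{b,l}$ for each interruption time.

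For (i), I would observe that $X_i$ is a geometric schedule with common ratio $B:=b^l$ scaled by $b^i$. Splitting the supremum in~\eqref{eq:bidding} into the $j=0$ term, which equals $b^i\le b^{l-1}$, and the $j\ge 1$ terms, which form a monotone sequence approaching $B^2/(B-1)$, and using $B^2/(B-1)\ge B>b^{l-1}$ for $B>1$, I conclude that $\rho_{X_i}=B^2/(B-1)$. Since $b$ was chosen so that $B\in[\zeta_{1,r},\zeta_{2,r}]$, Property~\ref{prop:roots} gives $\rho_{X_i}\le r$, and every $X_i$ is $r$-robust. For (ii), a direct bookkeeping computation places the contract of length $b^m$ on processor $m\bmod l$ as its $\lfloor m/l\rfloor$-th contract, completing at time $t_m=(b^{m+l}-b^{m\bmod l})/(b^l-1)$. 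A short case split on whether $m\bmod l$ equals $l-1$ yields $t_{m+1}>t_m$ for every $m$, so across all $l$ processors contracts complete in the order of their exponents. The worst interruption is therefore infinitesimally before some $t_{m+1}$, with longest completed contract of length $b^m$; the ratio $t_{m+1}/b^m$ is monotone in $m$ and tends to $b^{l+1}/(b^l-1)=f(b)$, which is therefore the consistency of the scheme.

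Finally, the choice of $b$ makes $f(b)$ equal to the optimum of the minimization program in Theorem~\ref{thm:lower.pareto}, so ${\cal X}_{b,2^k}$ is Pareto-optimal. The main obstacle, in my view, is the bookkeeping of step (ii)—verifying the monotonicity of the sequence $t_m$ and checking that the supremum over $m$ is attained only asymptotically, so that no small-$m$ corner case (for instance, the interruption just before the first multi-processor completion) spoils the bound. Both checks reduce to the elementary inequality $B^2/(B-1)>B$ for $B>1$, so beyond the tools already in the excerpt only careful calculation is needed.
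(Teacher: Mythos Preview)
Your proof is correct and follows essentially the same approach as the paper's: invoke Property~\ref{prop:roots} for (i) by recognizing each $X_i$ as a scaled geometric schedule with base $b^{2^k}$, and for (ii) compute the $2^k$-processor acceleration ratio as $b^{2^k+1}/(b^{2^k}-1)$, then optimize over $b$ in the stated interval. The only difference is that the paper defers the computation in (ii) to a citation (Theorem~3 in~\cite{DBLP:conf/aaai/0001K21}) while you supply the completion-time bookkeeping explicitly, which makes your version self-contained.
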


\begin{proof}
Each schedule $X_i$ in ${\cal X}_{b,2^k}$ is essentially geometric with base 
$b^{2^k}$ (with a multiplicative offset equal to $b^i$), hence, from Property~\ref{prop:roots}, $X_i$ has acceleration ratio at most $b^{2^{k+1}}/(b^{2^k}-1)$. From the same property, as long as $\zeta_{1,r} \leq b^{2^k} \leq \zeta_{2,r}$, each such schedule $X_i$ is $r$-robust. In addition, following the proof of Theorem 3 in~\cite{DBLP:conf/aaai/0001K21}, 
the best schedule in ${\cal X}_{b,2^k} $ has acceleration ratio at most
$b^{2^k+1}/(b^{2^k}-1)$, which is, therefore, the consistency of the schedule chosen by the advice. Selecting $b \in [\zeta_{1,r}^{1/2^k}, \zeta_{2,r}^{1/2^k}]$, so that $b^{2^k+1}/(b^{2^k}-1)$ is minimized, yields an $r$-robust schedule whose consistency matches the one of Theorem~\ref{thm:lower.pareto}, and is therefore optimal. Note that finding the appropriate $b>1$ is computationally efficient, since it only involves
finding the roots $\zeta_{1,r}$ and $\zeta_{2,r}$.
\end{proof}

We conclude with an observation concerning the class of schedules ${\cal X}_{b,l}$ that we will use in the following section. 
From the definition of ${\cal X}_{b,l}$ (and in particular, the cyclic assignment of contract lengths over the different schedules), for any interruption time $T$, there is a cyclic permutation 
$\pi$ of $\{0, \ldots l-1\}$ which determines the ordering of the schedules in ${\cal X}_{b,l}$ in terms of their performance. In other words, $X_{\pi(0)}$ is the best schedule that completes the 
longest contract by time $T$, and $X_{\pi(2^k-1)}$ is the worst schedule that completes the smallest contract by time $T$. We will refer to this property as the {\em cyclic property of ${\cal X}_{b,l}$}.

\section{Contract scheduling with noisy advice}
\label{sec:noisy}

In this section, we study contract scheduling under the noisy advice model. Namely, the scheduler has access to an advice string $\sigma$ of size $k$, with at most $H$ erroneous bits, 
for some known $H \in [0,k]$. 
We will give both upper and lower bounds on the performance of schedules, in Sections~\ref{subsec:noisy.upper} and~\ref{subsec:noisy.lower}, respectively, which apply to the standard noisy model. In 
Section~\ref{subsec:noisy.discussion} we discuss these bounds, the improvements we obtain, and explain how they can be extended to the robust noisy model.

\subsection{Upper bound}
\label{subsec:noisy.upper}

The idea behind the upper bound is as follows. We would like to use the advice so as to select a good schedule from the 
set ${\cal X}_{b,2^k}$ (see Definition~\ref{def:bunch.old}), from some suitable $b>1$. From the cyclic property of ${\cal X}_{b,2^k}$, we know
that  there is a cyclic ordering of its schedules in decreasing order of performance, but the algorithm does not know the precise ordering. To model this situation, we define the following problem which we call {\sc MinCyclic}.  
The input to this problem is an array $A[0\ldots n-1]$ whose elements are an unknown cyclic permutation of $\{0, \ldots ,n-1\}$. 
The objective is to use $k$ queries, at most $H$ of which can be erroneous, so as to identify an index of the array whose element is as small as possible. 

Following the notation of~\cite{RivestMKWS80}, we define the partial sum of binomial coefficients as

\[
  \mch{N}{m}   := \sum_{j=0}^m {N \choose j}, \  \textrm{for $m \leq N$}.
\]

We can use an algorithm of~\cite{RivestMKWS80} so as to obtain an algorithm for {\sc MinCyclic}. Note that in~\cite{RivestMKWS80}, the objective is to minimize the number of comparison queries while searching with error. For our setting, a comparison query\footnote{As explained in~\cite{DBLP:conf/aaai/0001K21}, both comparison and subset queries have an equivalent, and natural interpretation of the form ``Does the interruption $T$ occur in a certain subdivision of the timeline''? This makes such queries useful in a more practical context.} asks a question of the form: ``For the unknown 
index $x \in [0, \ldots ,n-1]$, for which $A[x]=0$, is $x \leq B$, for some chosen $B$''?
\begin{theorem}[Appendix]
There is an algorithm for {\sc MinCyclic} based on comparison queries that 
outputs an index $j$ such that $A[j] \leq \lceil n\mch{k-H}{H}/2^{k-H} \rceil$, for all $H \leq k/2$.
\label{thm:cyclic.upper}
\end{theorem}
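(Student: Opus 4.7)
The plan is to reduce \mincycl to the problem of localizing the unique index $x^*$ for which $A[x^*]=0$, and then to invoke a known algorithm for binary search with errors. Since $A$ is a cyclic permutation of $\{0,\dots,n-1\}$, we have $A[j]=(j-x^*)\bmod n$ for every $j$; consequently, if we narrow $x^*$ down to a cyclic interval of length $L$ and then output $j$ equal to the trailing endpoint of that interval, we are guaranteed that $A[j]\leq L-1$ regardless of where $x^*$ actually lies within it. The comparison queries available to \mincycl are precisely order-bisection queries on the candidate positions, so the task reduces to bounding by $\lceil n\mch{k-H}{H}/2^{k-H}\rceil$ the length of the uncertainty interval that remains after $k$ such queries with up to $H$ erroneous responses.

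To bound this length I would invoke the algorithm of Rivest et al., which is tailored to exactly this setting. Their algorithm maintains a weighted state over candidate positions: after $q$ queries have been asked, each surviving candidate $x$ is assigned weight $\mch{k-q}{H-e(x)}$, where $e(x)$ is the number of answers so far that are inconsistent with $x$ being $x^*$ (and $x$ is discarded once $e(x)>H$). Each threshold $B$ is chosen so as to approximately bisect the total weight between the two sides of the order. A standard invariant argument then shows that the total weight is at most halved per query and that the surviving candidates continue to form a contiguous cyclic arc. Together these facts imply that, at termination, the cardinality of the surviving set, and hence the length of the uncertainty interval, obeys the desired bound.

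The quantitative step that has to be checked is matching the precise expression $\lceil n\mch{k-H}{H}/2^{k-H}\rceil$. The denominator $2^{k-H}$ reflects the fact that an adversary controlling the lies cannot prevent bisection on more than $H$ of the $k$ queries: once the lie budget is expended, the remaining queries behave like noiseless bisection, halving the uncertainty interval each time. The numerator factor $\mch{k-H}{H}$ accounts for the residual ambiguity caused by the unknown placement of those $H$ lies among the $k-H$ effective queries; the natural proof is induction on $k-H$, exploiting the Pascal-type identity $\mch{k-H}{H}=\mch{k-H-1}{H}+\mch{k-H-1}{H-1}$, which mirrors the two possible true answers to each query. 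The main obstacle I anticipate is verifying that the bisection step remains feasible in the comparison-query model at every round, since the algorithm cannot arbitrarily partition candidates but must respect their linear order; this is where the hypothesis $H\leq k/2$ enters, as it guarantees that no matter how the adversary distributes the lies the weight on each side of some admissible threshold stays close enough to balanced for the halving argument to go through.
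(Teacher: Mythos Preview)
Your high-level reduction matches the paper's: locate the index $x^*$ with $A[x^*]=0$ to within a short interval and return its upper endpoint. The paper's proof, however, extracts the exact constant by a more direct route. Rather than running a weight-based search on all $n$ positions and bounding the residual uncertainty, it first fixes $m=\lfloor 2^{k-H}/\mch{k-H}{H}\rfloor$, partitions $[0,n-1]$ into $m$ contiguous bins each of length at most $\lceil n/m\rceil$, and then invokes Rivest et al.\ as a black box: their theorem states that $k$ comparison queries with up to $H$ lies suffice to identify one of $m$ items \emph{exactly} whenever $2^{k-H}\geq m\,\mch{k-H}{H}$. Finding the correct bin and outputting its largest index immediately gives $A[j]\leq\lceil n/m\rceil=\lceil n\,\mch{k-H}{H}/2^{k-H}\rceil$, and the hypothesis $H\leq k/2$ is simply what is needed for $\mch{k-H}{H}$ to be well-defined.

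Your third paragraph, where you try to derive the specific expression $\mch{k-H}{H}/2^{k-H}$ from first principles, is where the gap lies. The weight-halving invariant you describe would yield the bound $n\,\mch{k}{H}/2^k$, not $n\,\mch{k-H}{H}/2^{k-H}$; the former is in fact tighter, but it is the Berlekamp-type bound for \emph{arbitrary} subset queries, where exact weight bisection is always available. For comparison queries exact halving can fail, as you yourself flag. The intuitions you offer as a patch---that the adversary ``cannot prevent bisection on more than $H$ of the $k$ queries,'' and that $\mch{k-H}{H}$ arises from a Pascal-type induction on $k-H$---do not correspond to how the terms $2^{k-H}$ and $\mch{k-H}{H}$ actually appear in Rivest et al.'s analysis, and would not assemble into a proof as stated. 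The paper sidesteps the entire issue by citing their result rather than reproving it.
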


Combining Theorem~\ref{thm:cyclic.upper} with the cyclic property of ${\cal X}_{b,2^k}$ we have the following result. 

\begin{theorem}
In the noisy advice setting, there is a schedule based on comparison queries of acceleration ratio at most
\[
\frac{1+U}{2^k}\left(1+\frac{2^k}{1+U}\right) ^{1+\frac{1+U}{2^k}}, \quad 
\textrm{where $U=2^H \mch{k-H}{H}$}.
\]
\label{thm:noisy.upper}
\end{theorem}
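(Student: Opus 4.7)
The plan is to use the $k$-bit noisy advice, interpreted as the responses to the adaptive comparison queries of Theorem~\ref{thm:cyclic.upper}, in order to select a schedule from the family ${\cal X}_{b,2^k}$ (Definition~\ref{def:bunch.old}) for a suitable base $b>1$ to be fixed at the end. The construction will rest on the cyclic property noted at the end of Section~\ref{sec:tradeoff}: for each interruption $T$, the $2^k$ schedules in ${\cal X}_{b,2^k}$ are totally ordered in decreasing performance at $T$ by a cyclic permutation of $\{0,\ldots,2^k-1\}$ whose identity depends on $T$.

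Setting $l=2^k$, I will first identify this cyclic permutation with the unknown array $A$ of \mincycl\ by declaring $A[i]$ to be the rank of $X_i$ at time $T$. Each comparison query of Theorem~\ref{thm:cyclic.upper} has the natural interpretation ``does $T$ lie below a certain threshold on the timeline?'' and is thus encoded as a single advice bit. Feeding the (corrupted) $k$-bit advice into that algorithm yields an index $j$ with
\[
A[j] \;\leq\; \bigl\lceil l\,\mch{k-H}{H}/2^{k-H}\bigr\rceil \;=\; 2^{H}\mch{k-H}{H} \;=\; U,
\]
so the selected schedule has rank at most $U$ at the true interruption time $T$.

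Next, I will convert this rank bound into an acceleration-ratio bound. A direct computation shows that the completion time of the contract $b^m$ is $T_m = (b^{m+l}-b^{m\bmod l})/(b^l-1)$, which is strictly increasing in $m$; hence, for $T\in[T_{m-1},T_m)$, the set of contracts completed across ${\cal X}_{b,l}$ is the prefix $b^0,\ldots,b^{m-1}$, and the schedule of rank $r$ has longest completed contract exactly $b^{m-1-r}$. Taking $T\to T_m^-$ in the worst case and using rank $\leq U$ gives
\[
\frac{T}{\ell(X_j,T)} \;\leq\; \frac{T_m}{b^{m-1-U}} \;\leq\; \frac{b^{U+l+1}}{b^l-1}.
\]
The first-order condition on the right-hand side then yields $b^l = 1+l/(U+1)$; substituting and rearranging (using the identity $(U+l+1)/l = ((U+1)/l)\bigl(1+l/(U+1)\bigr)$) produces the closed form $\frac{1+U}{2^k}\bigl(1+2^k/(1+U)\bigr)^{1+(1+U)/2^k}$ claimed in the statement.

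I expect the main obstacle to be the first step, i.e., justifying rigorously the encoding of the (adaptive) queries of Theorem~\ref{thm:cyclic.upper} as $k$ pre-set noisy advice bits about the adversarial $T$, and matching the index returned by \mincycl\ with the actual rank of the corresponding schedule at time $T$. Once this reduction is in place, the remainder is a mechanical chain of inequalities followed by a standard single-variable minimization in~$b$.
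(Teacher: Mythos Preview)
Your proposal is correct and follows essentially the same approach as the paper: apply the \textsc{MinCyclic} algorithm of Theorem~\ref{thm:cyclic.upper} with $n=2^k$ to select a schedule from ${\cal X}_{b,2^k}$, use the cyclic property to bound the rank of the selected schedule by $U$, derive the acceleration-ratio bound $b^{2^k+1+U}/(b^{2^k}-1)$, and minimize in~$b$. Your write-up is in fact somewhat more self-contained than the paper's, which invokes the proof of Theorem~\ref{thm:pareto.upper} for the $b^{2^k+1+U}/(b^{2^k}-1)$ bound rather than computing $T_m$ explicitly; the ``obstacle'' you anticipate about encoding adaptive queries as advice bits is handled exactly as you sketch and is not discussed further in the paper either.
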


\begin{proof}
We apply the algorithm of Theorem~\ref{thm:cyclic.upper} on the set of indices of all 
schedules in ${\cal X}_{b,2^k}$ with $n=2^k$, where $b>1$ will be chosen later. The output is the index of a schedule in ${\cal X}_{b,2^k}$ which is ranked at most $U$ among the schedules in ${\cal X}_{b,2^k}$
(where lower ranking indicates better contract length completed by the interruption time). From the definition of ${\cal X}_{b,2^k}$, and in particular its cyclic property, this means that the selected schedule completes a contract of length at most $b^U$ times larger than 
the best schedule in ${\cal X}_{b,2^k}$. Following the proof of Theorem~\ref{thm:pareto.upper}, we infer that the acceleration ratio of the selected schedule is at most $\frac{b^{2^k+1+U}}{b^{2^k}-1}$.  This expression is minimized for 
$b=(\frac{2^k+U+1}{U+1})^{1/2^k}$, from which we obtain that the acceleration ratio of our schedule is at most
\[
\frac{1+U}{2^k}\left(1+\frac{2^k}{1+U}\right) ^{1+\frac{1+U}{2^k}}.
\]
\end{proof}

\subsection{Lower bound}
\label{subsec:noisy.lower}

The idea behind the lower bound is as follows. With $k$ advice bits, the best one can do is 
chose the best schedule from a set ${\cal X}$ that consists of at most $2^k$ schedules. If the advice were error-free, $|{\cal X}|$ could be as large as $2^k$; however, in the presence of errors, the algorithm may chose to narrow $|{\cal X}|$.

Our approach follows the combination of two ideas. The first idea uses an abstraction for searching an element within a space $A$ of $n$ elements indexed in $[0, \ldots, n-1]$, and which are 
ranked according to an unknown ranking function 
$\pi$. Namely, the element with index $\pi(0)$ is the ``best'' element in $A$, the one with index $\pi(1)$ is the second best, and the one with index 
$\pi(n-1)$ is the worst. The algorithm is allowed to ask $k$ queries, $H$ of which may be erroneous, for some known $H\leq k$. The objective is to find an element $j$ in the set $A$, such that $\pi(j)$ is minimized.
 We denote this problem as {\sc Search}. The following result establishes a lower bound on the rank that can be achieved by an algorithm that uses subset queries. This will allow us, in turn, to place a lower-bound on the rank of the chosen schedule within ${\cal X}$. Here, a subset query is of the form ``Does the element of rank $\pi(0)$ belong in a certain subset of $\{0, \ldots, n-1\}$''?

\begin{theorem}[Appendix]
For given $n,k,H$, the response oracle can force an algorithm for {\sc Search} to output 
en element $e$ of $A$ such that $\pi(e) \geq \lfloor n\mch{k}{H}/2^{k} \rfloor$. 
\label{thm:cyclic.lower}
\end{theorem}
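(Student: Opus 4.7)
The plan is to establish the lower bound by a potential-function argument in the spirit of Berlekamp's analysis of the R\'enyi--Ulam liar game, adapted to subset queries about the rank-$0$ element.

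For each element $x \in A$ and each query round $i \in \{0,\ldots,k\}$, I would track $s_i(x)$, defined as the number of past oracle responses that are inconsistent with $x$ being the top-ranked element (equivalently, the minimum number of lies the oracle would have told so far if $x$ were the true rank-$0$ candidate). An element $x$ is a \emph{surviving candidate} at step $i$ precisely when $s_i(x) \leq H$. I then define
\[
w_i(x) = \mch{k-i}{H-s_i(x)}, \qquad W_i = \sum_{x \in A} w_i(x),
\]
with the convention that $\mch{N}{m}=0$ for $m<0$. Since $s_0(x)=0$ for every $x$, we start with $W_0 = n \cdot \mch{k}{H}$.

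The key step is to prove the invariant $W_{i+1}^{\mathrm{YES}} + W_{i+1}^{\mathrm{NO}} = W_i$ for any subset query at round $i$. For an element $x$ in the queried set $S$, a YES response leaves $s_{i+1}(x) = s_i(x)$ while a NO response gives $s_{i+1}(x) = s_i(x)+1$; for $x \notin S$ the roles are reversed. In either case Pascal's identity $\mch{N}{m} = \mch{N-1}{m} + \mch{N-1}{m-1}$ yields
\[
w_{i+1}(x)\big|_{\mathrm{YES}} + w_{i+1}(x)\big|_{\mathrm{NO}} = w_i(x),
\]
and summing over $x$ gives the invariant. The oracle answers each query so as to keep the larger of the two totals, thereby ensuring $W_{i+1} \geq W_i/2$. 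Iterating for $k$ rounds produces $W_k \geq n\,\mch{k}{H}/2^k$. Since $\mch{0}{m}$ equals $1$ when $m \geq 0$ and $0$ otherwise, $W_k$ is exactly the number of surviving rank-$0$ candidates, so at least $\lfloor n\,\mch{k}{H}/2^k \rfloor$ elements remain viable.

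To conclude, given the algorithm's output $e$, the oracle commits to a ranking $\pi$ consistent with its recorded answers: it assigns rank $0$ to any surviving candidate distinct from $e$ (if possible) and arranges the remaining ranks so as to push $e$ as far down as the bookkeeping allows, so that the surviving candidates occupy the top positions of $\pi$ with $e$ placed last among them (or after them if $e$ itself is not a survivor). A short counting argument then converts the bound on the number of survivors into the stated bound $\pi(e) \geq \lfloor n\,\mch{k}{H}/2^k \rfloor$.

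The principal technical obstacle is this final step rather than the weight-function manipulation, which is largely mechanical. One must verify that the oracle's post-hoc commitment to the complete ranking $\pi$ simultaneously respects the entire answer history (with at most $H$ disagreements) and realizes the claimed rank of $e$, and in particular that the floor in the statement is not lost when $n\,\mch{k}{H}/2^k$ is an integer. Coupling the weight bookkeeping with a careful choice of the full ranking, rather than only of the rank-$0$ position, is what makes the argument go through.
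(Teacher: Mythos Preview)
Your argument is correct and takes a genuinely different route from the paper. The paper proves the theorem by a black-box reduction: it invokes the lower bound of Rivest et al.\ for a continuous guessing game (player $A$ picks $r\in[0,n)$, player $B$ asks $k$ subset queries with up to $H$ lies and outputs $r'$; the adversary can force $|r-r'|\geq n\mch{k}{H}/2^k$), and then argues that a too-good algorithm for {\sc Search} would translate, via a bijection between array indices and unit subintervals of $[0,n)$, into a too-good strategy for that game. Your approach instead reproduces the Berlekamp weight argument directly inside {\sc Search}; this is essentially the machinery underlying the Rivest et al.\ bound, so your proof is self-contained whereas the paper's is shorter by citation.

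One simplification for your final step: since every query concerns only the identity of the rank-$0$ element, the oracle's answer history imposes \emph{no} constraint on how ranks $1,\ldots,n-1$ are assigned. There is therefore no need to require that ``surviving candidates occupy the top positions of $\pi$''; once $W_k\geq 2$, the oracle simply declares some survivor other than $e$ to have rank $0$ and is free to place $e$ at rank $n-1$. This dissolves the bookkeeping you flag as the principal obstacle. The residual off-by-one you worry about when $n\mch{k}{H}/2^k$ is an integer is genuine, but it is present in the paper's own argument as well (their contradiction hypothesis is $\pi(e)<\lfloor n\mch{k}{H}/2^k\rfloor-1$), so you are not losing anything relative to the original.
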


The second idea is to define a {\em measure} that relates how much worse a schedule of rank $j$ in ${\cal X}$ has to be relative to the best schedule in ${\cal X}$. We will accomplish this by appealing to the concept of {\em fault tolerance}. More precisely, given integers $p$, $\phi$, with $\phi<p$, we define $\mathtt{FCS}(p,\phi)$ as the problem of finding a $p$-processor schedule of minimum acceleration ratio, if up to $\phi$ processors may be faulty. Here, a processor failure means that the processor may cease to function as early as $t=0$, and thus may not contribute any completed contracts. To solve this problem, we will rely on Gal's {\em functional theorem}, stated below. 

\begin{theorem}[Gal~\cite{gal:general}]
Let $q$ be a positive integer, and $X=(x_i)_{i=0}^\infty$ a sequence of positive numbers with 
$\sup_{n \geq 0} x_{n+1}/x_n <\infty$ and $\alpha_X>0$. Suppose that $F_i$ is a sequence of functionals that satisfy the following properties:
\begin{itemize}
\item[(1)]$ F_i(X)$ depends only on $x_0,x_1, \ldots x_{i+q}$,
\item[(2)] $F_i(X)$ is continuous in every variable, for all positive sequences $X$,
\item[(3)] $F_i(a X)=F_i(X)$, for all $a>0$,
\item[(4)] $F_i(X+Y) \leq \max(F_i(X), F_i(Y))$, for all positive sequences $X,Y$, and
\item[(5)] $F_{i+j}(X) \geq F_i(X^{+j})$, for all $j \geq 1$, where $X^{+j}=(x_j, x_{j+1}, \ldots)$.  
\end{itemize}
Then 
\[
\sup_{0 \leq k <\infty} F_k(X) \geq \sup_{0 \leq k<\infty} F_k (G_{\alpha_X}), 
\]
where $G_a$ is defined as the geometric sequence $(a^i)_{i=0}^\infty$.
\label{thm:gal}
\end{theorem}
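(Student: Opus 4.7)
My plan is to follow the classical pattern used by Gal and later adapted in search theory: normalize via the homogeneity axiom (3), build an auxiliary sequence by combining suitably scaled shifts of $X$ that approximates the geometric sequence $G_{\alpha_X}$, and then transfer the lower bound using the subadditivity axiom (4), the translation axiom (5), and continuity.

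First I would exploit properties (3) and (5) to reduce to comparisons on a common scale. Property (3) says $F_i(aX)=F_i(X)$ for all $a>0$, so I can replace each shifted sequence $X^{+n}=(x_n,x_{n+1},\ldots)$ by its normalization $\tilde X^{(n)} := x_n^{-1} X^{+n}$, which starts at $1$ and satisfies $F_i(\tilde X^{(n)})=F_i(X^{+n})$. Property (5) then gives $F_i(\tilde X^{(n)}) = F_i(X^{+n}) \leq F_{i+n}(X)$, so every functional value on a normalized shift is dominated by $\sup_m F_m(X)$.

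Next, fix $\beta<\alpha_X$. By definition of $\alpha_X=\limsup_n x_n^{1/n}$, infinitely many indices $n$ satisfy $x_n\geq \beta^n$. I would construct, for each truncation level $N$, an auxiliary sequence
\[
Y^{(N)} \;=\; \sum_{n=0}^{N} c_n \, \tilde X^{(n)},
\]
with positive weights $c_n$ chosen so that on every coordinate $j\in\{0,\ldots,i+q\}$ relevant to $F_i$ (via axiom (1)), the sum $Y^{(N)}_j$ is sandwiched between constant multiples of $\beta^j$; the bound $\sup_n x_{n+1}/x_n<\infty$ is used here to control tail growth and guarantee convergence. Iterating property (4) over the finite sum yields $F_i(Y^{(N)})\leq \max_{n\leq N} F_i(\tilde X^{(n)}) \leq \sup_m F_m(X)$. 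Taking $N\to\infty$, using continuity (2), and then applying (3) to strip the multiplicative constant, one obtains $F_i(G_\beta)\leq \sup_m F_m(X)$ for every $i$. Finally, letting $\beta\to \alpha_X^-$ and invoking continuity once more yields $\sup_i F_i(G_{\alpha_X})\leq \sup_m F_m(X)$, which is the desired inequality.

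The main obstacle is the rigorous construction in the middle step: one needs weights $c_n$ such that the partial sums $\sum_{n\leq N} c_n \tilde x^{(n)}_j$ approximate $\beta^j$ uniformly for $j\in\{0,\ldots,i+q\}$, while simultaneously permitting the max-subadditivity (4) to pass to the infinite sum via continuity. This is where both the limit-superior structure of $\alpha_X$ and the uniform ratio bound $\sup_n x_{n+1}/x_n<\infty$ become indispensable; the former ensures that infinitely many shifts carry the required geometric signal, while the latter prevents pathological gaps that would otherwise break the approximation. Once the construction is in place, the remainder of the argument is a clean limit-passage via (2) and (3).
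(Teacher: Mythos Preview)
The paper does not prove this theorem. It is stated with attribution to Gal~\cite{gal:general} and is used as a black-box tool (notably in the proof of Theorem~\ref{thm:multi-search.alpha.faulty}), so there is no ``paper's own proof'' to compare against. Your write-up is a proof sketch of Gal's original result, not a reproduction of anything in this paper.

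That said, as a sketch of Gal's argument your outline follows the right architecture: normalize shifts via (3), dominate via (5), aggregate via (4), and pass to the limit via (2). The place where you are vaguest is also the place where the real work lies, and you acknowledge this: the construction of the weights $c_n$ so that $\sum_n c_n \tilde X^{(n)}$ approximates $G_\beta$ coordinatewise. In Gal's actual proof this is not done by an ad hoc approximation but by a specific summation/generating-function identity that expresses a geometric sequence as a positive combination of the normalized shifts; the hypothesis $\sup_n x_{n+1}/x_n<\infty$ is used to ensure the relevant series converge, and the $\limsup$ definition of $\alpha_X$ is used to pick a subsequence along which the required inequalities hold. Your paragraph gestures at this correctly but does not supply the identity or the convergence argument, so as written it is a plausible plan rather than a proof. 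If you intend to include a proof, you should either cite Gal (as the paper does) or fill in the explicit construction of the $c_n$ and the uniform-approximation estimate on the first $i+q+1$ coordinates.
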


The next theorem is the main technical result for $\mathtt{FCS}(p,\phi)$, that allows us to apply Gal's theorem. 
The proof extends ideas of~\cite{aaai06:contracts} to the fault-tolerant setting.  To keep the notation consistent, we will denote by $\bar{X}$ the sequence of all contract lengths, in non-decreasing order, in the $p$-processor schedule $X$. 

\begin{theorem}[Appendix]
Every schedule $X$ for $\mathtt{FCS}(p,\phi)$ that has finite acceleration ratio satisfies
\[
\acc(X) \geq \frac{\alpha_{\bar{X}}^{p+1+\phi}}{\alpha_{\bar{X}}^p-1}.
\]
\label{thm:multi-search.alpha.faulty}
\end{theorem}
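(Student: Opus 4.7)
The plan is to apply Gal's functional theorem (Theorem~\ref{thm:gal}) to the merged sequence $\bar X$ using the functional
\[
F_n(Y) \;=\; \frac{\sum_{k=0}^{\lfloor n/p \rfloor} y_{n-kp}}{y_{n-1-\phi}},
\]
with the convention $y_i = 0$ for $i < 0$. Evaluating on a geometric sequence yields $F_n(G_\alpha) = \alpha^{1+\phi} \sum_{k \le \lfloor n/p\rfloor} \alpha^{-kp}$, whose limit as $n \to \infty$ is $\alpha^{p+1+\phi}/(\alpha^p - 1)$, which is exactly the target lower bound.

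First I would verify the five hypotheses of Gal's theorem for $F_n$. Finite support, continuity, and scale invariance are immediate. The max-subadditivity~(4) is an instance of the mediant inequality $(A+A')/(B+B') \leq \max(A/B, A'/B')$ applied to the numerator and denominator separately. The shift-monotonicity~(5) holds because $F_{n+j}(Y)$ and $F_n(Y^{+j})$ share the same denominator $y_{n+j-1-\phi}$, while the numerator of $F_{n+j}(Y)$ contains every term of $F_n(Y^{+j})$ and possibly more, since $\lfloor (n+j)/p \rfloor \geq \lfloor n/p \rfloor$.

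The heart of the argument is the inequality $F_n(\bar X) \leq \acc(X)$ for every $\mathtt{FCS}(p,\phi)$ schedule $X$ and every $n$. To establish it I would exhibit an explicit adversarial strategy. For each processor $j$, let $\tau_j$ be the time at which $j$ first completes a contract of length strictly greater than $\bar x_{n-1-\phi}$, and order these as $\tau_{(1)} \leq \cdots \leq \tau_{(p)}$. Pick the interruption time $T = \tau_{(\phi+1)} - \varepsilon$ and let the failure set consist of the $\phi$ processors achieving $\tau_{(1)}, \dots, \tau_{(\phi)}$. By construction, no surviving processor has completed a contract of length greater than $\bar x_{n-1-\phi}$ by time $T$, so the largest surviving completed contract is at most $\bar x_{n-1-\phi}$. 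Writing $\tau_j = s_j + y_j$, where $s_j$ is the total length of the contracts of length $\leq \bar x_{n-1-\phi}$ on $j$ and $y_j$ is the length of $j$'s first contract of length $> \bar x_{n-1-\phi}$, aggregation gives $\sum_j s_j = \sum_{i=0}^{n-1-\phi} \bar x_i$ and $\sum_j y_j \geq \sum_{i=n-\phi}^{n-\phi+p-1} \bar x_i$. Combining these aggregate bounds with an order-statistics argument that exploits the within-processor monotonicity of contract lengths should then yield $\tau_{(\phi+1)} \geq \sum_{k=0}^{\lfloor n/p \rfloor} \bar x_{n-kp}$, and hence $T/\bar x_{n-1-\phi} \geq F_n(\bar X)$.

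Once $F_n(\bar X) \leq \acc(X)$ is in place, Gal's theorem yields $\acc(X) \geq \sup_n F_n(\bar X) \geq \sup_n F_n(G_{\alpha_{\bar X}}) = \alpha_{\bar X}^{p+1+\phi}/(\alpha_{\bar X}^p - 1)$, completing the proof. The main obstacle is precisely the order-statistics step that lower bounds $\tau_{(\phi+1)}$: for the cyclic assignment of $\bar X$ the inequality is tight and follows by direct computation, but an arbitrary unbalanced assignment can distort individual $\tau_j$'s, and one must argue, via an exchange argument or a potential-function calculation, that any such imbalance is always paid for in the $(\phi+1)$-th order statistic. This is precisely the step in which the non-fault proof of~\cite{aaai06:contracts} requires genuine extension to the fault-tolerant regime.
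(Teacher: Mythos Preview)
Your overall plan---reduce to a functional inequality on $\bar X$ and invoke Gal's theorem---is exactly right, and your verification of conditions (1)--(5) for your functional is clean. The gap is precisely where you flagged it: the order-statistics step $\tau_{(\phi+1)}\ge \sum_{k}\bar x_{n-kp}$ is not merely unproven, it is \emph{false} in general, so no exchange argument will rescue the strategy as written. Take $p=2$, $\phi=0$, and assign to processor~1 the contracts $2^{n},2^{n+2},2^{n+4},\ldots$ and to processor~2 the contracts $2^{0},2^{1},\ldots,2^{n-1},2^{n+1},2^{n+3},\ldots$, so that $\bar x_i=2^i$. With threshold $\bar x_{n-1}=2^{n-1}$ one has $\tau_1=2^{n}$ (processor~1's very first contract already exceeds the threshold) and $\tau_2=3\cdot 2^{n}-1$, hence $\tau_{(1)}=2^{n}$; but your strided sum equals $\bar x_n+\bar x_{n-2}+\cdots \approx \tfrac{4}{3}\cdot 2^{n}$. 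Your adversary, interrupting at $T=2^{n}-\varepsilon$, witnesses ratio only $2$, strictly below $F_n(\bar X)\approx 8/3$. The point is that a single adversarial interruption cannot in general certify a strided-sum lower bound on one processor's elapsed time when the assignment is lopsided.

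The paper sidesteps this by using a different functional,
\[
F_q(\bar X)=\frac{\sum_{i=0}^{q+\phi+1}\bar x_i}{\sum_{i=q-p+1}^{q}\bar x_i},
\]
whose numerator is a full prefix sum and whose denominator is a window of $p$ consecutive terms. The inequality $\acc(X)\ge F_q(\bar X)$ is then obtained not from a single interruption but by writing, for each processor $m$, $\acc(X)\ge (\text{work on }m\text{ up to a common reference time})/d_{j_m}$ and combining all $p$ inequalities via the mediant bound $\max_m a_m/b_m\ge (\sum a_m)/(\sum b_m)$. Summing the numerators gives the prefix sum (every contract up to the $(\phi+1)$-th largest completed length, plus the $\phi+1$ longer ones), and summing the $d_{j_m}$ gives at most the $p$-term window. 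This averaging-over-processors trick is what lets the paper avoid any per-processor order-statistic control; it is the idea your proposal is missing.
\qed
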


We now show how to obtain the lower bound, by combining the above ideas.

\begin{theorem}
For every schedule $X$ and $k$ subset queries in the noisy advice model, 
we have $\acc(X) \geq \frac{1}{L}(1+L)^{1+1/L}$, 
where $L=2^k/\mch{k}{H}$.
\label{thm:noisy.lower}
\end{theorem}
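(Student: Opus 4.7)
My plan is to abstract the problem as a noisy search for the best candidate schedule in a set of fixed size, and then to relate that search to fault-tolerant multi-processor contract scheduling. Any schedule $X$ that uses $k$ bits of advice functionally selects a member of some set ${\cal X}$ of at most $2^k$ candidate contract schedules; I focus on the case $|{\cal X}|=2^k$, noting that algorithms using fewer distinct candidates have weakly larger worst-case acceleration ratio and therefore inherit any lower bound proved for the $2^k$-candidate case. I then view ${\cal X}$ as a virtual $p$-processor schedule with $p=2^k$, and at each interruption time $T$ rank the members of ${\cal X}$ in decreasing order of the length of the longest contract each completes by $T$.

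The crux of the proof is the following reduction. Applying Theorem~\ref{thm:cyclic.lower} with the search space of size $n=2^k$ indexing ${\cal X}$, $k$ subset queries, and up to $H$ erroneous responses, the response oracle can force the algorithm to output a schedule whose rank in the induced ordering is at least $\phi:=\mch{k}{H}$. A rank-$\phi$ selection completes, at time $T$, no more than the $(\phi+1)$-th longest contract among those completed across ${\cal X}$ by time $T$; this is precisely what survives in an instance of $\mathtt{FCS}(p,\phi)$ in which the adversary faults the $\phi$ best-performing processors. Coupling the oracle's adversarial placement of erroneous bits in {\sc Search} with the adversary's choice of faulty processors in $\mathtt{FCS}$, one concludes that $\acc(X)$ is at least the acceleration ratio of the corresponding fault-tolerant instance.

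By Theorem~\ref{thm:multi-search.alpha.faulty}, this acceleration ratio is at least $\alpha^{p+1+\phi}/(\alpha^p-1)$, where $\alpha:=\alpha_{\bar{\cal X}}>1$ (finiteness of $\acc(X)$ forces $\alpha>1$). Substituting $y:=\alpha^p>1$, and using the crude inequality $(1+\phi)/p\geq \phi/p=1/L$ together with $y>1$, weakens this to $\acc(X)\geq y^{1+1/L}/(y-1)$; a short one-variable optimization---the log-derivative vanishes at $y=L+1$---yields the minimum value $\frac{1}{L}(1+L)^{1+1/L}$ over $y>1$, which is exactly the claimed bound. The main obstacle is the reduction in the previous paragraph: making precise the identification between a noisy-advice algorithm's worst-case output and the worst-case accessible contract of a fault-tolerant $p$-processor schedule, i.e., coupling the oracle's choice of erroneous query responses with the adversary's joint choice of faulty processors and interruption time. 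Once this correspondence is established, the Gal-type bound of Theorem~\ref{thm:multi-search.alpha.faulty} and the final optimization are routine.
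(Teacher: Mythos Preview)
Your argument for the case $|{\cal X}|=2^k$ is correct and coincides with the paper's treatment of the case $l=2^k$ (and more generally $l\ge L$): apply Theorem~\ref{thm:cyclic.lower} with $n=l$, feed the resulting rank bound into Theorem~\ref{thm:multi-search.alpha.faulty}, substitute $y=\alpha^{l}$, relax the exponent to $1+1/L$, and optimize.

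The gap is your reduction to $|{\cal X}|=2^k$. The sentence ``algorithms using fewer distinct candidates have weakly larger worst-case acceleration ratio and therefore inherit any lower bound proved for the $2^k$-candidate case'' is asserted without proof, and it is not obvious in the noisy setting. With noise, shrinking the image of the decision function can \emph{help}: in the extreme, an algorithm with $l=1$ ignores the advice entirely and is immune to bit flips. So there is no monotone ``more candidates can only help'' argument available here, and an $l$-candidate algorithm is not a $2^k$-candidate algorithm in any sense that lets your Theorem~\ref{thm:cyclic.lower}/Theorem~\ref{thm:multi-search.alpha.faulty} chain go through (padding with duplicates breaks the rank-to-fault-tolerance correspondence, and perturbing duplicates to make them distinct can change the completed contract at a given $T$ by a constant factor). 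Concretely, for $l<L$ the correct application of Theorem~\ref{thm:cyclic.lower} gives $\phi_l=\lfloor l/L\rfloor=0$, so your rank argument yields nothing; the bound in that regime must come from somewhere else.

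The paper closes this gap by a short case analysis on $l$. If $l<L$, then $\phi_l=0$ and Theorem~\ref{thm:multi-search.alpha.faulty} (with no faults) already gives $\acc(X)\ge \alpha^{l+1}/(\alpha^l-1)$; minimizing over $\alpha$ yields $\tfrac{1}{l}(l+1)^{1+1/l}$, which is decreasing in $l$ and hence at least $\tfrac{1}{L}(1+L)^{1+1/L}$. If $L\le l\le 2^k$, one uses $l+1+\phi_l\ge l(1+1/L)$ and proceeds exactly as you do. Adding this two-line case split to your write-up fixes the proof.
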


\begin{proof}
Every algorithm for the problem  will use the $k$ advice bits so as to select a schedule from a set ${\cal X}=\{X_0, \ldots , X_{l-1}\}$ of candidate schedules,  for some $l \leq 2^k$. For a given interruption time, there is an ordering of the $l$ schedules in 
${\cal X}$ such that, $X_{\pi(i)}$ has no worse acceleration ratio than $X_{\pi(i+1)}$, namely the permutation orders the schedules in decreasing order of their performance. From Theorem~\ref{thm:cyclic.lower}, it follows that the algorithm will chose a schedule 
$X_{j}$ such that $\pi(j)\geq \lfloor l\mch{k}{H}/2^{k} \rfloor$. The acceleration ratio of the selected schedule is at least the acceleration ratio of the $l$-processor schedule defined by 
${\cal X}$, 
 in which up to $\phi_l = \lfloor l\mch{k}{H}/2^{k} \rfloor$ processors may be faulty. From Theorem~\ref{thm:multi-search.alpha.faulty},
\begin{equation}
\acc(X) \geq \frac{\alpha_{\bar{X}}^{l+1+\phi_l}}{\alpha_{\bar{X}}^l-1}, \quad \textrm{with $\phi_l = \lfloor l\mch{k}{H}/2^{k} \rfloor$}.
\label{eq:nasty}
\end{equation}
We now consider two cases. Suppose first that $l<L$. In this case,
case $\phi_l=0$, and therefore~\eqref{eq:nasty} implies that $\acc(X) \geq \alpha_{\bar{X}}^{l+1}/(\alpha_{\bar{X}}^l-1)$, which is minimized for $\alpha_{\bar{X}}=(l+1)^{1/l}>1$, therefore $\acc(X) \geq \frac{1}{l}(l+1)^{1+1/l}$. This function is decreasing in $l$, and since $l<L$ we have
\[
\acc(X) \geq \frac{1}{L}(1+L)^{1+1/L}.
\]
Next, suppose that $l \in [L,2^k]$. In this case,~\eqref{eq:nasty} gives 
\[
\acc(X) \geq \frac{\alpha_{\bar{X}}^{l(1+1/L)}}{\alpha_{\bar{X}}^l-1}.
\] 
The above expression is minimized for $\alpha_{\bar{X}} =(1+L)^{1/l}$, and by substitution we obtain again
\[
\acc(X) \geq \frac{1}{L}(1+L)^{1+1/L}.
\]
\end{proof}

\subsection{Extensions and comparison of our bounds}
\label{subsec:noisy.discussion}

\subsubsection{Extension to the robust model}

The upper and lower bounds of Sections~\ref{subsec:noisy.upper} and~\ref{subsec:noisy.lower} apply to the standard noisy model. However, our techniques are readily applicable to the robust noisy model.  For the upper bound, we need to add a constraint that enforces the requirement that each schedule in ${\cal X}_{b,2^k}$ is $r$-robust. From the proof of Theorem~\ref{thm:pareto.upper}, we know that this constraint can be expressed as $\zeta_{1,r} \leq b^{2^k} \leq \zeta_{2,r}$. Therefore, we obtain a lower bound by finding $b>1$ that minimizes the expression 
\[
\min \frac{b^{2^k+1+U}}{b^{2^k}-1} \quad \textrm{subject to $\zeta_{1,r} \leq b^{2^k} \leq \zeta_{2,r}$}.
\] 

In a similar vein, for the lower bound, we can follow the proof of Theorem~\ref{thm:noisy.lower} but also require that each of the $l$ schedules in ${\cal X}$ is $r$-robust. 
From Corollary~\ref{cor:merge}, using the notation of Section~\ref{subsec:noisy.lower}, this means the additional constraint 
$ \zeta_{1,r} \leq \alpha_{\bar{X}}^{l} \leq \zeta_{2,r}$.
We can thus optimize the functions discussed in the proof, subject to the above constraint.

Note the similarity between the constraints concerning the upper and lower bounds. This implies that small gaps in the standard noisy
model will intuitively remain small in the robust noisy model.

\subsubsection{Improvements and comparison of our bounds}

We compare our upper bound to the upper bound of~\cite{DBLP:conf/aaai/0001K21}, and for simplicity we focus on the standard noisy model\footnote{The schedule of~\cite{DBLP:conf/aaai/0001K21} applies to the robust noisy model. However, since it is based on a collection of 
schedules of the form ${\cal X}_{b,k}$, we can translate their result to the standard noisy model. Note that in~\cite{DBLP:conf/aaai/0001K21} the size of ${\cal X}$ is only $k$.}. As the number of advice bits $k$ becomes
large, the upper bound of Theorem~\ref{thm:noisy.upper} is very close to $f(2^k/U)$, where $f(x)=\frac{1}{x}(1+x)^{1+1/x}$. Although $U$ does not have a closed form, we can use the following useful approximation for the partial sum of binomial coefficients~\cite{macwilliams1977theory}. Denote by ${\cal H}$ the binary entropy function. Then

\begin{equation}
\frac{2^{N{\cal H}(\frac{m}{N})}}{\sqrt{8m(1-\frac{m}{N})}}
\leq \mch{N}{m} \leq 2^{N{\cal H}(\frac{m}{N})}, \quad \textrm{for $0<m<N/2$.} 
\label{eq:amazing}
\end{equation}

Let $\tau=H/k$ denote the upper bound on the fraction of the erroneous bits in the advice string, and suppose that $\tau \leq 1/2$. 
Then~\eqref{eq:amazing} implies that 
\[
\frac{2^k}{U}\geq \frac{2^k}{2^H 2^{(k-H){\cal H}(\frac{\tau}{1-\tau})}}=
2^{k(1-\tau)(1-{\cal H}(\frac{\tau}{1-\tau}))}.
\]
In contrast, the schedule of~\cite{DBLP:conf/aaai/0001K21} has acceleration ratio $f(2H/k)$, 
and note that, using the above definitions, $2H/k=2\tau$. 

Since $f(x)$ is decreasing in $x$, and tends to $1$ as $x\rightarrow \infty$, we observe that for
$\tau \leq 1/2$,  the acceleration ratio of our schedule approaches 1 as $k \rightarrow \infty$. In contrast, the schedule
of~\cite{DBLP:conf/aaai/0001K21} has acceleration ratio roughly $f(2H/k)=f(2\tau)$,
which is constant and independent of $k$, again for $\tau \leq 1/2$. Hence, we obtain substantially better performance.

Next, we compare our upper and lower bounds against each other. The lower bound of Theorem~\ref{thm:noisy.lower} equals $f(L)$, 
where recall that $L=2^k/\mch{k}{H}$. For large $x$, $f(x)\approx 1+ 1/x$. Thus, for large $k$, we can approximate the lower bound 
as $1+1/L$. A rather crude approximation of the upper bound is $1+1/(2^H L)$, but a better, albeit more complex approximations can be obtained by using the left-hand side inequality of~\eqref{eq:amazing}.
Last, note that $L$ increases exponentially with $k$, since $L \geq 2^{k(1-{\cal H}(\tau))}$.

\section{An application: Robust fault-tolerant contract scheduling}
\label{sec:ft}

The techniques we introduced can find applications in multi-criteria optimization problems outside the realm of computation with
advice. We illustrate such an example, in the context of contract scheduling. We define the {\em robust fault-tolerant contract scheduling} problem as follows.
Given $r \in \mathbb{R}$ with $r\geq 4$, $p \in \mathbb{Z}$, 
and $f \in \mathbb{Z}$, with $f<p$, find a $p$-processor schedule $X$ which has minimum acceleration ratio,
if up to $f$ processors are faulty, but also has acceleration ratio at most $r$, if all but a single processor are faulty. 
This is an extension of the fault-tolerant model of~\cite{kupavskii2018lower}, in which we treat $f$ as a ``soft'' bound on the number of faults that 
may occur, and would still like the schedule to perform well if this bound is exceeded. We denote this problem as 
$\mathtt{RFT}(r,p,f)$.

\begin{theorem}[Appendix]
There is an optimal schedule for $\mathtt{RFT}(r,p,f)$ of acceleration ratio
\[
\min  \frac{b^{p+f+1}}{b^p-1} \quad \textrm{subject to} \quad b \in [\zeta_{1,r}^{1/p}, \zeta_{2,r}^{1/p}].
\]
\label{thm:rft}
\end{theorem}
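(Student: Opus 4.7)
The plan is to prove the two bounds separately. For the \emph{upper bound}, I would instantiate the cyclic schedule ${\cal X}_{b,p}$ from Definition~\ref{def:bunch.old} with a base $b$ chosen so that $b^p \in [\zeta_{1,r}, \zeta_{2,r}]$. For the \emph{lower bound}, I would combine Theorem~\ref{thm:multi-search.alpha.faulty} (which already handles the $f$-faulty part) with Corollary~\ref{cor:merge} (which handles the $r$-robustness requirement when only one processor survives).

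For the upper bound, the reasoning follows the template of Theorem~\ref{thm:pareto.upper}. First, each constituent single-processor sequence $X_i$ in ${\cal X}_{b,p}$ is geometric with base $b^p$ (up to a multiplicative offset $b^i$), so by Property~\ref{prop:roots} it has acceleration ratio at most $b^{2p}/(b^p-1) \leq r$ whenever $b^p \in [\zeta_{1,r},\zeta_{2,r}]$. This exactly handles the ``all but one processor faulty'' requirement. Next, I need to verify that with up to $f$ faulty processors the acceleration ratio is at most $b^{p+f+1}/(b^p-1)$. The key structural fact is that the global multiset of contract lengths across all $p$ processors is exactly $\{b^i\}_{i \geq 0}$, with $b^i$ assigned cyclically to processor $i \bmod p$. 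At a worst-case interruption time $T$ infinitesimally below the completion of some $b^{n+1}$, the non-faulty acceleration ratio is exactly $b^{p+1}/(b^p-1)$ (the known optimum for the multi-processor cyclic schedule). Since the $f+1$ longest contracts completed by time $T$, namely $b^{n-f}, b^{n-f+1}, \ldots, b^n$, lie on $f+1$ distinct processors (as $f<p$ and the assignment is cyclic), any adversarial choice of $f$ faulty processors leaves at least one of them alive, so the longest surviving completed contract is at least $b^{n-f}$. This inflates the ratio by a factor of at most $b^f$, yielding $b^{p+f+1}/(b^p-1)$.

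For the lower bound, let $X$ be any feasible schedule for $\mathtt{RFT}(r,p,f)$, defined by its $p$ per-processor sequences $X_0,\ldots,X_{p-1}$. Since the schedule must perform within ratio $r$ when all but a single processor fail, each individual $X_j$ must be $r$-robust as a stand-alone single-processor schedule (otherwise the adversary leaves only the bad $X_j$ alive). Corollary~\ref{cor:merge} then forces
\[
\alpha_{\bar{X}} \in [\zeta_{1,r}^{1/p},\zeta_{2,r}^{1/p}].
\]
Meanwhile, since the schedule must tolerate up to $f$ faults with optimal acceleration ratio, Theorem~\ref{thm:multi-search.alpha.faulty} gives
\[
\acc(X) \geq \frac{\alpha_{\bar{X}}^{p+f+1}}{\alpha_{\bar{X}}^{p}-1}.
\]
Minimizing the right-hand side subject to the feasibility constraint on $\alpha_{\bar{X}}$ matches the upper bound and completes the proof.

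The main obstacle I anticipate is the fault-tolerance step of the upper bound: one has to argue carefully that, for every worst-case interruption time, the cyclic structure guarantees the longest surviving completed contract shrinks by no more than a factor $b^f$, and that the $T$ accounting absorbs cleanly into the $b^{p+f+1}/(b^p-1)$ bound. The constraint manipulation and the optimization over $b$ are then routine, as is verifying that the feasible range $[\zeta_{1,r}^{1/p},\zeta_{2,r}^{1/p}]$ is nonempty for $r\geq 4$, using Property~\ref{prop:roots}.
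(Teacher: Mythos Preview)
Your proposal is correct and follows essentially the same approach as the paper: for the lower bound you invoke Theorem~\ref{thm:multi-search.alpha.faulty} together with Corollary~\ref{cor:merge} (after observing each $X_j$ must itself be $r$-robust), and for the upper bound you use the cyclic schedule ${\cal X}_{b,p}$ with $b^p\in[\zeta_{1,r},\zeta_{2,r}]$ and the fact that the $f{+}1$ longest completed contracts at any interruption lie on distinct processors. The paper's write-up differs only cosmetically, computing the ratio $T/b^{jp+l-f-1}$ directly rather than phrasing it as a $b^f$ inflation of the non-faulty ratio.
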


\bibliographystyle{plain}
\bibliography{fault-contract,targets}

\newpage

\bigskip
\appendix
{\Large \bf Appendix}

\section{Omitted proofs of Section~\ref{sec:tradeoff}}

We will first give the proof of Theorem~\ref{thm:zetas}, and in fact prove a somewhat stronger statement, in which we will give explicit formulas for the constants $c$ and $d$.

\begin{proposition}\label{recurrence:1}
	Let $(x_i)_{i=0}^\infty$ and $(y_i)_{i=0}^\infty$ be such that
	\begin{equation}\label{bound:0}
		\sum_{i=0}^nx_i\leq r x_{n-1}
		\quad\text{and}\quad
		\sum_{i=0}^ny_i= r y_{n-1},
	\end{equation}
	for all $n\geq 2$, where $r\geq 4$ is a fixed constant. If $x_0\geq y_0$ and $x_1\leq y_1$, then $x_i\leq y_i$, for all $i\geq 2$.
\end{proposition}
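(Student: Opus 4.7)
\emph{Plan.} I would work directly on the differences $\Delta_i := y_i - x_i$ and their partial sums. Subtracting the two relations in \eqref{bound:0} gives $\sum_{i=0}^n \Delta_i \geq r\,\Delta_{n-1}$ for every $n \geq 2$, together with $\Delta_0 \leq 0$ and $\Delta_1 \geq 0$; the goal is to establish $\Delta_n \geq 0$ for all $n \geq 2$. Setting $T_n = \sum_{i=0}^n \Delta_i$ and $s_n := T_n - r(T_{n-1}-T_{n-2})$, the hypothesis becomes $s_n \geq 0$ for $n \geq 2$, and $T_n$ satisfies the inhomogeneous second-order linear recurrence
\[
T_n = r\,T_{n-1} - r\,T_{n-2} + s_n, \qquad T_0 = \Delta_0,\ T_1 = \Delta_0 + \Delta_1,
\]
whose characteristic polynomial is exactly $\lambda^2 - r\lambda + r$, with roots $\zeta_1, \zeta_2$ as in Property~\ref{prop:roots}. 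The plan is to read off $\Delta_n \geq 0$ from this spectral structure, which is the same one that already governs the bounds on robust schedules.

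I would decompose $T_n = T_n^{\mathrm{hom}} + T_n^{\mathrm{par}}$. The particular part is a convolution of the slacks against the Green's function $g_m = (\zeta_2^{m+1}-\zeta_1^{m+1})/(\zeta_2-\zeta_1)$; since $r \geq 4$ forces $1 \leq \zeta_1 \leq \zeta_2 \leq r-1$, a direct computation shows $g_m \geq 0$ and is non-decreasing in $m$, so the increments $\Delta_n^{\mathrm{par}} = T_n^{\mathrm{par}} - T_{n-1}^{\mathrm{par}}$ are non-negative combinations of the $s_k$. For the homogeneous part, write $T_n^{\mathrm{hom}} = A\zeta_1^n + B\zeta_2^n$; solving the two-variable system from $T_0, T_1$ and using the bounds on $\zeta_1, \zeta_2$ just stated yields $B \geq 0$ and $A \leq 0$. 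Then $\Delta_n^{\mathrm{hom}} \geq 0$ is equivalent to $B\,\zeta_2^{n-1}(\zeta_2-1) \geq -A\,\zeta_1^{n-1}(\zeta_1-1)$, and because $\zeta_2 \geq \zeta_1$ the ratio of the two sides is non-decreasing in $n$. It therefore suffices to verify the inequality at $n=2$, where it reduces to $\Delta_2^{\mathrm{hom}} = (r-1)\Delta_1 - \Delta_0 \geq 0$, immediate from the sign assumptions. Summing the two pieces gives $\Delta_n \geq 0$ for every $n \geq 2$.

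The main obstacle I expect is that a naive induction on $\Delta_n$ is too weak: the only available bound is $\Delta_n \geq r\,\Delta_{n-1} - \sum_{i<n}\Delta_i$, and the accumulated tail can dominate $r\,\Delta_{n-1}$ at any single step, so no term-by-term estimate propagates. The remedy is precisely the passage to partial sums, which absorbs the accumulation into a clean second-order recurrence controlled by the polynomial of Property~\ref{prop:roots}; the fact that both roots lie in $[1,\infty)$ when $r \geq 4$ is exactly what makes the Green's function non-negative and monotone and what forces the correct signs of $A, B$. The confluent case $r = 4$ (so $\zeta_1 = \zeta_2 = 2$) is handled by the standard repeated-root modification, replacing the ansatz by $(\alpha + \beta n)2^n$; the sign analysis of $\alpha, \beta$ from $\Delta_0 \leq 0$ and $\Delta_1 \geq 0$ is entirely analogous.
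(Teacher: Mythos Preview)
Your argument is correct, and it uses the same spectral data as the paper (the roots $\zeta_{1,r},\zeta_{2,r}$ of $\lambda^2-r\lambda+r$), but the organization is genuinely different. The paper does not pass to differences. Instead it unrolls the relations \eqref{bound:0} for $x$ and $y$ separately: it introduces a matrix recursion $(A_{k+1},B_{k+1})^\top=M(A_k,B_k)^\top$ with $M=\begin{pmatrix} r-1 & -1\\ 1 & 1\end{pmatrix}$, proves by induction that $x_n\le A_k x_{n-k}-B_k\sum_{i<n-k}x_i$ and $y_n=A_k y_{n-k}-B_k\sum_{i<n-k}y_i$, and then sets $k=n-1$ to obtain $x_n\le A_{n-1}x_1-B_{n-1}x_0\le A_{n-1}y_1-B_{n-1}y_0=y_n$; the only remaining work is to check $A_k,B_k\ge 0$, done by diagonalizing $M$ (with the repeated-root variant at $r=4$). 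Your Green's function $g_m=(\zeta_2^{m+1}-\zeta_1^{m+1})/(\zeta_2-\zeta_1)$ is in fact exactly the paper's $B_{m+1}$, so the two proofs are manipulating the same quantities from different ends. What your route buys is a clean separation between the effect of the initial data (the homogeneous part) and the accumulated slack $s_n\ge 0$ (the particular part), and it makes the role of $r\ge 4$ very transparent as the condition that both roots are $\ge 1$, hence $g_m$ is nonnegative and nondecreasing. What the paper's route buys is that, as a byproduct of the unrolling, one gets explicit closed formulas for $y_n$ in terms of $y_0,y_1$; these are immediately reused in the next proposition to derive the sharp upper and lower bounds $c\,\zeta_{2,r}^i$ and $d\,\zeta_{1,r}^i$ on any $r$-robust schedule, which is the real target of this section.
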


\begin{proof}
	First, we introduce the sequence of coefficients $(A_k,B_k)_{k=1}^\infty$ defined recursively by
	\begin{equation*}
		\begin{pmatrix}
			A_{k+1}\\B_{k+1}
		\end{pmatrix}
		=M
		\begin{pmatrix}
			A_{k}\\B_{k}
		\end{pmatrix},
	\end{equation*}
	where
	\begin{equation*}
		M=\begin{pmatrix}
			r-1&-1\\1&1
		\end{pmatrix},
	\end{equation*}
	with initial values $(A_1,B_1)=(r-1,1)$. We claim that
	\begin{equation}\label{bound:1}
		A_k,B_k\geq 0,\quad\text{for every }k\geq 1.
	\end{equation}
	Let us momentarily assume the validity of \eqref{bound:1} and complete the proof of the proposition.
	
	To that end, we show that
	\begin{equation}\label{bound:2}
		x_{n}\leq A_kx_{n-k}-B_k\sum_{i=0}^{n-k-1}x_i
		\quad\text{and}\quad
		y_{n}= A_ky_{n-k}-B_k\sum_{i=0}^{n-k-1}y_i,
	\end{equation}
	for all $1\leq k<n$. This follows from an induction argument on $k$. Indeed, for a given $n\geq 2$, the case $k=1$ is a mere reformulation of \eqref{bound:0}. Then, in view of \eqref{bound:1}, assuming that \eqref{bound:2} holds for some $1\leq k\leq n-2$, we obtain
	\begin{equation*}
		\begin{aligned}
			x_{n}&\leq A_kx_{n-k}-B_k\sum_{i=0}^{n-k-1}x_i
			\leq A_k(rx_{n-k-1}-\sum_{i=0}^{n-k-1}x_i)-B_k\sum_{i=0}^{n-k-1}x_i
			\\
			&= (A_k(\rho-1)-B_k)x_{n-k-1}-(A_k+B_k)\sum_{i=0}^{n-k-2}x_i
			= A_{k+1}x_{n-k-1}-B_{k+1}\sum_{i=0}^{n-k-2}x_i
		\end{aligned}
	\end{equation*}
	and similarly for $y_n$,
	thereby establishing \eqref{bound:2} for all $1\leq k<n$.
	
	Now, observe that setting $k=n-1$ in \eqref{bound:2} yields
	\begin{equation}\label{bound:3}
		x_{n}\leq A_{n-1}x_{1}-B_{n-1}x_0
		\leq A_{n-1}y_{1}-B_{n-1}y_0=y_n,
	\end{equation}
	for all $n\geq 2$, which completes the proof.
	
	Thus, there only remains to justify the bound \eqref{bound:1}, which will easily follow from an explicit representation formula for $(A_k,B_k)$ based on an eigenvector decomposition of $M$. More precisely, straightforward calculations establish that the eigenvalues of $M$ are the two roots $\zeta_{1,r}\leq \zeta_{2,r}$ of the characteristic polynomial $p(\zeta)=\zeta^2-r\zeta+r$, which are given explicitly in Section~\ref{sec:preliminaries}.
	They are both positive if $r\geq 4$ and distinct whenever $r>4$. In fact, it is readily seen that $\zeta_{2,r}\geq \zeta_{1,r}>1$. Moreover, it holds that $\zeta_{2,r}+\zeta_{1,r}=r$, $\zeta_{2,r}\zeta_{1,r}=r$ and $(\zeta_{2,r}-1)(\zeta_{1,r}-1)=1$.
	
	When $r>4$, we obtain the eigenvector decomposition
	\begin{equation}\label{bound:4}
		\begin{pmatrix}
			A_k\\B_k
		\end{pmatrix}
		=
		\frac{\zeta_{2,r}^k}{\zeta_{2,r}-\zeta_{1,r}}
		\begin{pmatrix}
			\zeta_{2,r}-1\\1
		\end{pmatrix}
		-
		\frac{\zeta_{1,r}^k}{\zeta_{2,r}-\zeta_{1,r}}
		\begin{pmatrix}
			\zeta_{1,r}-1\\1
		\end{pmatrix}
	\end{equation}
	for every $k\geq 1$, which implies \eqref{bound:1} because $\zeta_{2,r}\geq \zeta_{1,r}>1$. Finally, further letting $r\to 4$ yields the representation
	\begin{equation}\label{bound:5}
		\begin{pmatrix}
			A_k\\B_k
		\end{pmatrix}
		=
		\begin{pmatrix}
			2^k+k2^{k-1}\\ k2^{k-1}
		\end{pmatrix}
	\end{equation}
	in the case $r=4$, which also validates \eqref{bound:1} and thus concludes the proof of the proposition.
\end{proof}

\begin{remark}
	Observe from \eqref{bound:3}, \eqref{bound:4} and \eqref{bound:5} in the preceding proof that one has the convenient representation formulas, for every $n\geq 2$,
	\begin{equation}\label{representation:1}
		\begin{aligned}
			y_n&=\frac{\zeta_{2,r}^{n-1}(\zeta_{2,r}-1)-\zeta_{1,r}^{n-1}(\zeta_{1,r}-1)}{\zeta_{2,r}-\zeta_{1,r}}y_1
			-\frac{\zeta_{2,r}^{n-1}-\zeta_{1,r}^{n-1}}{\zeta_{2,r}-\zeta_{1,r}}y_0
			\\
			&=\zeta_{2,r}^{n-1}\frac{(\zeta_{2,r}-1)y_1-y_0}{\zeta_{2,r}-\zeta_{1,r}}
			-\zeta_{1,r}^{n-1}\frac{(\zeta_{1,r}-1)y_1-y_0}{\zeta_{2,r}-\zeta_{1,r}},
		\end{aligned}
	\end{equation}
	if $r>4$, and
	\begin{equation}\label{representation:2}
		\begin{aligned}
			y_n&=(2^{n-1}+(n-1)2^{n-2})y_1-(n-1)2^{n-2}y_0
			\\
			&=2^{n-1}y_1+(n-1)2^{n-2}(y_1-y_0),
		\end{aligned}
	\end{equation}
	when $r=4$. If one further requires that \eqref{bound:0} hold for $n=1$, whereby $y_1=(r-1)y_0$, then one finds that
	\begin{equation*}
		y_n=\frac{\zeta_{2,r}^{n}(\zeta_{2,r}-1)-\zeta_{1,r}^{n}(\zeta_{1,r}-1)}{\zeta_{2,r}-\zeta_{1,r}}y_0,
	\end{equation*}
	if $r>4$, and
	\begin{equation*}
		y_n=\left(3\cdot2^{n-1}+(n-1)2^{n-1}\right)y_0,
	\end{equation*}
	when $r=4$.
\end{remark}

\begin{proposition}
	Let $(z_i)_{i=0}^\infty$ be a nonnegative sequence such that
	\begin{equation*}
		\sum_{i=0}^nz_i\leq r z_{n-1},
	\end{equation*}
	for all $n\geq 2$, where $r\geq 4$ is a fixed constant. Then, for every $i\geq 0$, one has the bounds
	\begin{equation*}
		(\zeta_{1,r}-1)\zeta_{1,r}^i z_0\leq z_{i+1}\leq \frac{\zeta_{2,r}^{i+1}-\zeta_{1,r}^{i+1}}{\zeta_{2,r}-\zeta_{1,r}}z_1,
	\end{equation*}
	where $\zeta_{2,r}$ and $\zeta_{1,r}$ are defined in Section~\ref{sec:preliminaries}. Furthermore, these bounds reduce to
	\begin{equation*}
		2^i z_0\leq z_{i+1}\leq (i+1)2^iz_1,
	\end{equation*}
	when $r=4$.
\end{proposition}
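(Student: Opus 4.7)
My plan is to derive both bounds from Proposition~\ref{recurrence:1} applied to $x=z$, together with the explicit representations \eqref{representation:1}--\eqref{representation:2} of the equality sequence. The upper bound will follow from choosing $y_0=0$, $y_1=z_1$, while the lower bound will require two applications of the same technique, first with $y_0=z_0$, $y_1=z_1$, and then iterated via a shift-and-telescope.

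For the upper bound, the hypotheses $z_0\geq 0=y_0$ and $z_1\leq z_1=y_1$ of Proposition~\ref{recurrence:1} are trivially satisfied, so $z_n\leq y_n$ for every $n\geq 2$. Substituting $y_0=0$, $y_1=z_1$ into \eqref{representation:1} and rewriting each factor $(\zeta-1)\zeta^{n-1}$ as $\zeta^n-\zeta^{n-1}$ will produce
\[
y_n=\frac{\zeta_{2,r}^n-\zeta_{1,r}^n}{\zeta_{2,r}-\zeta_{1,r}}\,z_1-\frac{\zeta_{2,r}^{n-1}-\zeta_{1,r}^{n-1}}{\zeta_{2,r}-\zeta_{1,r}}\,z_1,
\]
and discarding the nonnegative subtracted term will yield the claimed bound after setting $n=i+1$. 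The case $r=4$ is handled analogously through \eqref{representation:2}, which simplifies under the same initial conditions to $y_n=(n+1)\,2^{n-2}\,z_1\leq n\cdot 2^{n-1}\,z_1$ for every $n\geq 1$.

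For the lower bound, I will first establish the one-step inequality $z_1\geq(\zeta_{1,r}-1)z_0$. To do this, I apply Proposition~\ref{recurrence:1} with $y_0=z_0$, $y_1=z_1$ (again the hypotheses reduce to equalities); the conclusion $z_n\leq y_n$ combined with $z_n\geq 0$ forces $y_n\geq 0$ for every $n\geq 2$. Since formula \eqref{representation:1} gives $y_n\sim\frac{(\zeta_{2,r}-1)z_1-z_0}{\zeta_{2,r}-\zeta_{1,r}}\,\zeta_{2,r}^{n-1}$ as $n\to\infty$, the leading coefficient must be nonnegative, i.e., $(\zeta_{2,r}-1)z_1\geq z_0$, which rewrites as $z_1\geq(\zeta_{1,r}-1)z_0$ via the identity $(\zeta_{1,r}-1)(\zeta_{2,r}-1)=1$. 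The case $r=4$ is identical, using the coefficient of $n$ in \eqref{representation:2}.

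I will then promote this one-step inequality to the claimed geometric bound by a shift. For every $k\geq 0$, I define $\hat z_0:=S_k:=\sum_{i=0}^k z_i$ and $\hat z_j:=z_{k+j}$ for $j\geq 1$. The sequence $\hat z$ is nonnegative and satisfies $\sum_{i=0}^m\hat z_i=S_{k+m}\leq r\,z_{k+m-1}=r\,\hat z_{m-1}$ for $m\geq 2$, so applying the one-step inequality to it yields $z_{k+1}\geq(\zeta_{1,r}-1)\,S_k$. A short induction on $n$ then gives $S_n\geq\zeta_{1,r}^n z_0$ (base case trivial, step $S_{n+1}=S_n+z_{n+1}\geq S_n+(\zeta_{1,r}-1)S_n=\zeta_{1,r}S_n$), so
\[
z_{i+1}\geq(\zeta_{1,r}-1)\,S_i\geq(\zeta_{1,r}-1)\,\zeta_{1,r}^i\,z_0,
\]
which finishes the proof (for $r=4$ this reads $z_{i+1}\geq 2^i z_0$, since $\zeta_{1,4}-1=1$). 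The only subtle step is the asymptotic-to-pointwise implication inside the one-step lower bound: one must check that $y_n\geq 0$ for \emph{every} $n\geq 2$, not merely for large $n$. This will follow from the identity $c_2-c_1=z_1\geq 0$ between the two coefficients in the representation, together with $(\zeta_{2,r}/\zeta_{1,r})^{n-1}\geq 1$, ensuring that the $\zeta_{2,r}$-term dominates the $\zeta_{1,r}$-term throughout.
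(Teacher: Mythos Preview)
Your proof is correct and follows essentially the same route as the paper: Proposition~\ref{recurrence:1} plus the representations \eqref{representation:1}--\eqref{representation:2} for the upper bound, and the shifted-sequence trick $\hat z_0=\sum_{i\le k}z_i$ together with the asymptotic sign argument for the lower bound, followed by the same telescoping induction $S_{n+1}\geq \zeta_{1,r}S_n$. The only cosmetic differences are that you take $y_0=0$ (instead of $y_0=z_0$) for the upper bound, and that you first isolate the one-step inequality $z_1\geq(\zeta_{1,r}-1)z_0$ before shifting; the paper applies Proposition~\ref{recurrence:1} directly to the shifted sequence. Your final paragraph is unnecessary and slightly muddled: you already have $y_n\geq z_n\geq 0$ for all $n\geq 2$, so the asymptotic immediately forces the leading coefficient to be nonnegative, and no separate ``pointwise'' check is required.
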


\begin{proof}
	The upper bounds follow directly from Proposition \ref{recurrence:1}, with $x_0=y_0=z_0$ and $x_1=y_1=z_1$, and the representation formulas \eqref{representation:1} and \eqref{representation:2}.
	
	The lower bounds are more subtle. In order to establish their validity, we define, for any given integer $j\geq 0$, an auxiliary sequence $(x_i)_{i=0}^\infty$ by
	\begin{equation*}
		x_0=\sum_{k=0}^jz_k
		\quad\text{and}\quad
		x_i=z_{j+i},\text{ if }i\geq 1.
	\end{equation*}
	In particular, it holds that
	\begin{equation*}
		\sum_{i=0}^nx_i\leq r x_{n-1},
	\end{equation*}
	for all $n\geq 2$.
	
	Therefore, by Proposition \ref{recurrence:1} combined with the representation formulas \eqref{representation:1} and \eqref{representation:2}, we deduce that, for all $n\geq 2$,
	\begin{equation*}
		x_n\leq \zeta_{2,r}^{n-1}\frac{(\zeta_{2,r}-1)x_1-x_0}{\zeta_{2,r}-\zeta_{1,r}}
		-\zeta_{1,r}^{n-1}\frac{(\zeta_{1,r}-1)x_1-x_0}{\zeta_{2,r}-\zeta_{1,r}},
	\end{equation*}
	if $r>4$, and
	\begin{equation*}
		x_n\leq 2^{n-1}x_1+(n-1)2^{n-2}(x_1-x_0),
	\end{equation*}
	when $r=4$. Since $(x_i)_{i=0}^\infty$ is also nonnegative and, as $n\to\infty$, the dominant terms above are $\zeta_{2,r}^{n-1}$ and $(n-1)2^{n-2}$, we conclude that necessarily $x_0\leq (\zeta_{2,r}-1)x_1$, for all values $r\geq 4$. In terms of the original sequence $(z_i)_{i=0}^\infty$, recalling that $(\zeta_{2,r}-1)(\zeta_{1,r}-1)=1$, this yields that
	\begin{equation*}
		(\zeta_{1,r}-1)\sum_{k=0}^jz_k\leq z_{j+1},
	\end{equation*}
	for every $j\geq 0$. In particular, if
	\begin{equation*}
		z_{i+1}\geq (\zeta_{1,r}-1)\zeta_{1,r}^i z_0
	\end{equation*}
	holds for every $0\leq i\leq j$, then one finds that
	\begin{equation*}
		\begin{aligned}
			z_{j+2}&\geq (\zeta_{1,r}-1)\sum_{k=0}^{j+1}z_k\geq (\zeta_{1,r}-1)z_0+(\zeta_{1,r}-1)^2z_0\sum_{k=1}^{j+1}\zeta_{1,r}^{k-1}
			\\
			& =(\zeta_{1,r}-1)z_0+(\zeta_{1,r}-1)^2z_0\frac{\zeta_{1,r}^{j+1}-1}{\zeta_{1,r}-1}= (\zeta_{1,r}-1)\zeta_{1,r}^{j+1} z_0,
		\end{aligned}
	\end{equation*}
	thereby completing the proof of the lower bounds, by induction. We thus have arrived at the proof of Theorem~\ref{thm:zetas}.
\end{proof}

Next, we give a proof of Corollary~\ref{cor:merge}. Again, we will prove a somewhat stronger statement, as expressed in the following lemma. The proof of Corollary~\ref{cor:merge} follows by replacing $c_A, c_B$ with $c,d$, and $A$, $B$, with $\zeta_{1,r}$, $\zeta_{2,r}$, respectively, as well as by setting $R=1$.

\begin{lemma}
	Let $(x_{1,i})_{i=1}^\infty$, $(x_{2,i})_{i=1}^\infty$, \ldots, $(x_{N,i})_{i=1}^\infty$ be $N$ positive nondecreasing sequences satisfying the bounds
	\begin{equation*}
		c_AA^i\leq x_{j,i}\leq c_Bi^RB^i,
	\end{equation*}
	for all $i$ and $j$, where $R\geq 0$, $c_A, c_B>0$ and $A,B>1$ are given constants. Let $(y_i)_{i=1}^\infty$ be the sequence obtained by merging all $N$ sequences $(x_{j,i})_{i=1}^\infty$ and sorting the resulting set of values in nondecreasing order. Then, it holds that
	\begin{equation*}
		A^\frac 1N\leq\liminf_{i\to\infty}y_i^\frac 1i\leq\limsup_{i\to\infty}y_i^\frac 1i\leq B^\frac 1N.
	\end{equation*}
\end{lemma}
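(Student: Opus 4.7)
The plan is to extract $y_i$-bounds from the $N$ geometric envelopes in two symmetric ways, one via a pigeonhole argument on the contributions of each sequence to the merged prefix, and one via a direct comparison with the first few terms of each sequence. Both halves of the conclusion then follow by raising to the power $1/i$ and taking limits, since the constants $c_A$, $c_B$ and the polynomial factor $i^R$ disappear in the exponential rate.

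For the lower bound $\liminf_{i\to\infty} y_i^{1/i}\ge A^{1/N}$, I would fix $i\ge 1$ and, for each $j\in\{1,\ldots,N\}$, let $n_j(i)$ denote the number of terms of the sequence $(x_{j,\ell})_\ell$ that appear among the first $i$ positions of the merged sequence $(y_\ell)_\ell$. Then $\sum_j n_j(i)=i$, so pigeonhole gives some index $j^\star$ with $n_{j^\star}(i)\ge \lceil i/N\rceil$. Since each $(x_{j,\ell})_\ell$ is nondecreasing, the $n_{j^\star}(i)$-th element of that sequence is at most $y_i$, and the hypothesis gives
\[
y_i\;\ge\;x_{j^\star,\,n_{j^\star}(i)}\;\ge\;c_A\, A^{n_{j^\star}(i)}\;\ge\;c_A\, A^{\lceil i/N\rceil}.
\]
Taking $i$-th roots and sending $i\to\infty$ yields $\liminf y_i^{1/i}\ge A^{1/N}$.

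For the upper bound $\limsup_{i\to\infty} y_i^{1/i}\le B^{1/N}$, I would argue dually: the set obtained by taking the first $\lceil i/N\rceil$ terms from each of the $N$ sequences has cardinality $N\lceil i/N\rceil\ge i$, and every element of this set is at most $c_B\lceil i/N\rceil^R B^{\lceil i/N\rceil}$ by monotonicity together with the hypothesis. Consequently, the $i$-th smallest element of the full merged sequence satisfies
\[
y_i\;\le\; c_B\,\lceil i/N\rceil^R\, B^{\lceil i/N\rceil}\;\le\; c_B\, B\, (i/N+1)^R\, B^{i/N}.
\]
Taking $i$-th roots, the prefactor $[c_B B(i/N+1)^R]^{1/i}$ tends to $1$ while $B^{i/(Ni)}=B^{1/N}$, so $\limsup y_i^{1/i}\le B^{1/N}$.

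The middle inequality $\liminf \le \limsup$ is automatic, so combining the two bounds gives the claim. No step looks genuinely difficult; the only mild subtlety is to handle the ceilings and the polynomial factor $i^R$ carefully when taking the $i$-th root, which I would do by writing $\lceil i/N\rceil = i/N + O(1)$ and observing that $(i^R)^{1/i}\to 1$. Nondecreasingness of the sequences is essential for both directions, and the strict bounds $A,B>1$ ensure that the exponential terms dominate in the limit.
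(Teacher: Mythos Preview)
Your proof is correct and takes a genuinely different route from the paper's. The paper argues analytically: it introduces counting functions $f_j(t)=|\{i:x_{j,i}\le t\}|$, inverts the hypothesis bounds via logarithms to obtain two-sided estimates on each $f_j(t)$ (handling the polynomial factor $i^R$ with the auxiliary inequality $\log(i+1)\le\sqrt{i}$), and then uses the identity $i=\sum_j f_j(t)$ for $y_i\le t<y_{i+1}$ to sandwich $y_i$ between $c_A A^{i/N-1}$ and $c_B e^{R\sqrt{i}}B^{i/N+1}$.

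Your argument is purely combinatorial and more direct: pigeonhole on the contributions $n_j(i)$ gives the lower bound, and exhibiting an explicit collection of $N\lceil i/N\rceil\ge i$ elements that are all at most $c_B\lceil i/N\rceil^R B^{\lceil i/N\rceil}$ gives the upper bound. This avoids logarithms and the $\sqrt{i}$ trick entirely, and the polynomial factor is dispatched simply by $(i^R)^{1/i}\to 1$. The paper's counting-function approach is the one that generalizes more readily to envelopes beyond $i^R B^i$, but for the lemma as stated your argument is shorter and more elementary.
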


\begin{proof}
	For each $j=1,\ldots,N$, we define the function $f_j(t)\in\mathbb{N}$, where $t>0$, as the number of $x_{j,i}$'s such that $x_{j,i}\leq t$. More precisely, for any $t>0$, the value $f_j(t)$ is the unique nonnegative integer such that
	\begin{equation*}
		x_{j,i}\leq t,\text{ for all }1\leq i\leq f_j(t)
		\quad\text{and}\quad
		x_{j,i}> t,\text{ for all }i>f_j(t).
	\end{equation*}
	In particular, if $c_Bi^RB^i\leq t$, then $f_j(t)\geq i$. Therefore, if $c_Bi^RB^i\leq t<c_B(i+1)^RB^{i+1}$, using that $\log(i+1)\leq\sqrt i$ for all $i\geq 0$, we conclude that
	\begin{equation*}
		f_j(t)\geq i>\frac{\log t-\log c_B}{\log B}-1-\frac{R}{\log B}\sqrt{i}\geq \frac{\log t-\log c_B}{\log B}-1-\frac{R}{\log B}f_j(t)^\frac 12.
	\end{equation*}
	Similarly, noticing that, if $t<c_AA^i$, then $f_j(t)<i$, we conclude that
	\begin{equation*}
		f_j(t)< i\leq\frac{\log t-\log c_A}{\log A}+1,
	\end{equation*}
	whenever $c_AA^{i-1}\leq t<c_AA^{i}$.
	
	Now, consider any large $t>0$ such that $y_i\leq t<y_{i+1}$, for some large $i$. It must then hold that
	\begin{equation*}
		i=\sum_{j=1}^Nf_j(t),
	\end{equation*}
	whence
	\begin{equation*}
		N\left(\frac{\log t-\log c_B}{\log B}-1-\frac{R}{\log B}\sqrt{i}\right)<i<N\left(\frac{\log t-\log c_A}{\log A}+1\right).
	\end{equation*}
	In particular, setting $t=y_i$ yields that
	\begin{equation*}
		c_AA^{\frac iN-1}<y_i<c_Be^{R\sqrt i}B^{\frac iN+1},
	\end{equation*}
	which implies
	\begin{equation*}
		A^\frac 1N=\lim_{i\to\infty} c_A^\frac 1iA^{\frac 1N-\frac 1i}
		\leq\liminf_{i\to\infty}y_i^\frac 1i\leq\limsup_{i\to\infty}y_i^\frac 1i
		\leq \lim_{i\to\infty} c_B^\frac 1ie^{\frac R{\sqrt i}}B^{\frac 1N+\frac 1i}=B^\frac 1N,
	\end{equation*}
	thereby completing the proof of the lemma.
\end{proof}

\section{Omitted proofs of Section~\ref{sec:noisy}}

\begin{proof}[Proof of Theorem~\ref{thm:cyclic.upper}]
We will reduce {\sc MinCyclic} to a problem studied in~\cite{RivestMKWS80}. The latter is defined as a game between two players. Player $A$ guesses an integer, say $x$, in the range $[0, \ldots ,m-1]$. Player $B$ knows $m$, and must find $x$ using as few comparison queries as possible, in the presence of errors. Specifically, $A$ can give erroneous responses to at most $H$ queries, for some $H$ that is known to player $B$. Let $Q(m,H)$ denote the number of comparison queries that are sufficient for player $B$ to find $x$. Rivest et al.~\cite{RivestMKWS80}, presented an algorithm, which we call \wa, with $Q(m,H) \leq \min \{k | 2^{k-H} \geq m \cdot \mch{k-H}{H}\}$. In other words, for any $k$ that certifies 
\begin{equation}
    2^{k-H} \geq m \cdot \mch{k-H}{H}, \label{eq:rivest}
\end{equation} 
it is possible to find $x$ using $k$ queries. Note that it must be that $H\leq k/2$.

Given an instance of {\sc MinCyclic}, we create an instance of the above problem, with 
$m = 2^{k-H}/\mch{k-H}{H}$ (note that $H$ is the same for both instances). Partition the interval $[0, \ldots, n-1]$ into $m$ disjoint subintervals, each of length at most $\lceil n/m \rceil$. Let $x$ being the index in $[0, \ldots ,n-1]$ for which $A[x]=0$, and let $I_x$ denote the interval that contains $x$. 
A query $q$ of \wa that asks ``is $I_x \leq b$ for some $b \in \{0,m-1\}$?" 
translates to query $\mu(x)$ in the {\sc MinCyclic} instance that asks ``is $x \leq f(b)$?", where $f(b)$ is the largest value in the interval $I_b$. Note that the answer to $q$ is `yes' if and only if the answer to $\mu(q)$ is `yes'. The response to $\mu(q)$ is then given to \wa  which updates its state and proceeds with the next query. For $m$ defined as above,~\eqref{eq:rivest} holds and using \wa, we can find $I_x$ using $k$ queries.  
Subsequently, we return the largest integer $f(x)$ in $I_x$. 
Given that $x$ is in $I_x$ and the length of the intervals is at least  $\lceil n/m \rceil$, we conclude that the returned index $j$ is such that $A[j]=\leq \lceil n/m \rceil = \lceil n \mch{k-H}{H}/2^{k-H} \rceil $.
\end{proof}

\begin{proof}[Proof of Theorem~\ref{thm:cyclic.lower}]
Consider the following problem that can be described as a game between two players $A$ and $B$. $A$ guesses a value $r$ in the continuous interval $[0,n)$, for some fixed $n$ (for the purpose of the proof, we can think of $n$ as sufficiently large). Player $B$ knows $n$ but not $r$, and $B$ asks $k$ (subset) queries to $A$, $H$ of which can receive erroneous responses, for some $H \leq k$ known to both players. After receiving the responses to the $k$ queries, player $B$ returns a number $r' \in [0,n)$. The objective for $B$ is to minimize $|r'-r|$. We will denote this problem as {\sc Game}.
In~\cite{RivestMKWS80}, it is proved that, for all values of $H$, player $A$ can respond in a way that guarantee that $|r-r'| \geq n \mch{k}{H}/2^{k}$.

We now show a reduction from {\sc Game} to \search that will help us establish the lower bound. 
Suppose, by way of contradiction, that there exists an algorithm, say ALG for \search that returns an element $e$ with $\pi(e) < \lfloor n\mch{k}{H}/2^{k} \rfloor - 1$. We devise a strategy for player $B$ in {\sc Game} based on ALG. Given values of $r$, $k$ and $H$ that define an instance $G$ of {\sc Game} over the continuous interval $I = [0,n)$, create an instance $S$ of \search on a space $A$ of $n$ elements, with the same values of $k$ and $H$. Consider a bijective mapping $\beta$ that maps an element of rank $i$ in $A$ ($i\in \{0, \ldots, n-1\}$) to an interval $\beta(i) = [i,i+1)$ in $I$. 
Similarly, define a bijective mapping $\mu$ between queries asked for $G$ and those asked for $S$. Any range $[i,j]$ of indices that is a part of a subset query $q$ asked for $S$ is mapped to an interval $[i,j+1)$ in the query $\mu(q)$ asked for $G$. 
Let $r$ denote the searched value in $G$ and let $x$ denote an index of $A$ such that $r$ belongs to $\beta(x)$.
To search for $r$, we consider queries that ALG asks for $S$ and for any such query $q$, we ask $\mu(q)$ for $G$. The response to $\mu(q)$ is then given to ALG so that it can update its state and ask its next query.

Recall that we supposed ALG outputs an element $e$ such that $\pi(e) < \lfloor n\mch{k}{H}/2^{k} \rfloor -1$, and that $\beta(\pi(e)) = [\pi(e),\pi(e)+1)$. As an output for $G$, we return $r' = \pi(e)+1$ as the answer for $G$. Note that there are exactly $e+1$ intervals from the range of $\beta$ that lie between $r$ and $r'$ in $I$. That is $|r-r'| < \lfloor n\mch{k}{H}/2^{k} -1\rfloor+1 \leq n(\mch{k}{H}/2^{k}$. This, however, contradicts the result of~\cite{RivestMKWS80}. 
\end{proof}

\begin{proof}[Proof of Theorem~\ref{thm:multi-search.alpha.faulty}]
Consider a schedule $X$ for this problem. For $j \in [0,\ldots p-1]$, define $l_X(t,j)$ as the length of the largest contract in $S$ 
that has completed in processor $j$ by time $t$. We also define by $\ell_{X,\phi}(t)$ as the $(\phi+1)$-largest length in the set 
$\{l_X(t,j)\}_{j=0}^{p-1}$. 

Following the notation of~\cite{aaai06:contracts}, we denote each contract $c_j$ in $X$ as a pair of the form  $(T_j,D_j)$, where $T_j$ is the start time of $c_j$, and $D_j$ its length (as we will see, the specific processor to which the contract is assigned will not be significant for our analysis). We also define $d_j$ to be equal to $\ell_{X,\phi}(T_j+D_j-\epsilon)$, 
for $\epsilon \rightarrow 0$. In words, $d_j$ is the longest contract length that has been completed right before $c_j$ is about to terminate, assuming a worst-case scenario in which $\phi$ processors have been faulty, and they also happened to be the processors that have completed the longest contracts in the schedule by the said time: we call this contract length the {\em $(\phi+1)$ length relative to $D_j$}.

Recall that $\bar{X}$ denotes the sequence of all contract lengths in $X$, in non-decreasing order. 
Hence, each contract in $X$ is mapped via its length to an element of this sequence (breaking ties arbitrarily).

Fix a time, say $t$, at which a contract $c_{j_0}=(T_{j_0},D_{j_0})$ terminates, say on processor $0$ i.e., $t=T_{j_0}+D_{j_0}$. For all $m \in [1, p-1]$, let $c_{j_m}=(T_{j_m},D_{j_m})$ denote the longest contract length that has completed on processor $m$ by time $t$. For every $m \in [0,p-1]$, define $I_m$ as the set of indices in ${\mathbb N}$ such that $i \in I_m$ if and only if the contract of length $x_i$ has completed by time $t$ in processor $m$. From the definition of the acceleration ratio we have that 
\[
\acc(X) \geq \frac{\sum_{i \in I_m} x_i}{d_{j_m}}, \qquad \ \textrm{ for all $m \in [0, p-1]$.}
\]
Therefore,
\[
\acc(X) \geq \max_{0 \leq m \leq p-1} \frac{\sum_{i \in I_m} x_i}{d_{j_m}},
\]
and using the property $\max\{ a/b, c/d \} \geq \frac{a+b}{c+d}$, for all $a,b,c,d>0$, we obtain that
\begin{equation}
\acc(X) \geq  \frac{\sum_{m=0}^{p-1}\sum_{i \in I_m} x_i}{\sum_{m=0}^{p-1} d_{j_m}}.
\label{eq:summand}
\end{equation}
Next, we will bound the numerator of the fraction in~\eqref{eq:summand} from below, and its denominator from above. We begin with a useful observation: we can assume, without loss of generality, that by time $t$ (defined earlier), every contract of length $d_{j_0}$ or smaller has completed its execution. This follows from the definition of $d_{j_0}$: if $X$ completed a contract of length at most $d_{j_0}$ later than time $t$, then one could simply ``remove'' this contract from $X$, and obtain a schedule of no worse acceleration ratio (in other words, such a contract is useless, and one can derive a schedule of no larger acceleration ratio than $X$ that does not contain it). 

Using the above observation, it follows that the numerator in~\eqref{eq:summand} includes, as summands, all contracts of length at most $d_{j_0}$, as well as at least $\phi+1$ contracts that are at least as large as $d_{j_0}$. Let $q$ denote an index such that $d_{j_0}=\bar{x}_{q}$, then we have that 
\[
\sum_{m=0}^{p-1}\sum_{i \in I_m} x_i \geq \sum_{i=0}^{q+\phi+1} \bar{x}_i.
\]
We now show how to upper-bound the denominator, using the monotonicity implied in the definition of 
the $(\phi+1)$ length relative to a given contract length. Since $c_{j_0}$ is completed no earlier than any other contract 
$c_{j_m}$, with $m \in [1, p-1]$, we have that $d_{j_m} \leq d_{j_0}$. It thus follows
that
\[
\sum_{m=0}^{p-1} d_{j_m} \leq \sum_{i=q}^{q-(p-1)} \bar{x}_{i}.
\]
Combining the two bounds, it follows that 
\[
\acc(S) \geq \sup_{0\leq q<\infty} \frac{\sum_{i=0}^{q+\phi+1} \bar{x}_i}{\sum_{i=q}^{q-(p-1)} \bar{x}_{i}}.
\]
Define now the functional $F_{q}(\bar{X})=\frac{\sum_{i=0}^{q+\phi+1} \bar{x}_i}{\sum_{i=q}^{q-(p-1)} \bar{x}_{i}}$, for every $q$. The functional satisfies the conditions (1)-(5) of 
Theorem~\ref{thm:gal} (see Example 7.3 in~\cite{searchgames}). Moreover, $\sup_{n \geq 0} \bar{x}_{n+1}/\bar{x}_n <\infty$, otherwise an infinite contract would be scheduled in some processor, which, in turn, would render the corresponding processor ``useless'', since this contract would never complete. Last, we note that the condition $\alpha_X>0$ is indeed satisfied, from Theorem~\ref{thm:zetas} and Corollary~\ref{cor:merge}.

By applying Gal's Theorem, it follows that 
\[
 \acc(X) \geq 
 \sup_{0 \leq q<\infty} 
 \frac {    \sum_{i=0}^{q+\phi+1}  \alpha_{\bar{X}}^i   } 
{\sum_{i=q}^{q-(p-1)} \alpha_{\bar{X}}^i }.
\]
If $\alpha_{\bar{X}}\leq 1$, then it is easy to show that the above expression shows that $\acc(X)=\infty$; see, e.g.~\cite{aaai06:contracts}. Otherwise, i.e., if $\alpha_{\bar{X}}>1$, after some simple calculations along the lines of~\cite{aaai06:contracts} we arrive at the desired result.
\end{proof}

\section{Omitted proofs of other sections}

\begin{proof}[Proof of Theorem~\ref{thm:rft}]
We first show the lower bound. Let $X$ denote a $p$-processor schedule for {\sc Rft}($r,p,f$). 
This schedule is defined by a set of $p$ (single-processor) schedules, say $\{X_0, \ldots ,X_{p-1}\}$, which, by the definition of the problem, must be $r$-robust. We can apply Theorem~\ref{thm:multi-search.alpha.faulty} with $\phi=f$, which yields
\[
\acc(X) \geq \frac{\alpha_{\bar{X}}^{p+1+f}}{\alpha_{\bar{X}}^p-1}.
\]
Moreover, since every schedule in the set that defines $X$ must be $r$-robust, 
from Corollary~\ref{cor:merge} we have that 
\[
\alpha_{\bar X} \in [\zeta_{1, r}^{1/p}, \zeta_{2,r}^{1/p}].
\]
Therefore, $X$ must be such that
\begin{equation}
\acc(X) \geq  \frac{\alpha_{\bar{X}}^{p+1+f}}{\alpha_{\bar{X}}^p-1} \quad \textrm{subject to} \quad  \zeta_{1, r}^{1/p} \leq \alpha_{\bar{X}} \leq \zeta_{2, r}^{1/p}
\label{eq:lb.robust.ft}
\end{equation}
We now show a schedule for the problem whose acceleration ratio matches the above lower bound, and is thus optimal. Let $X$ be defined by the set ${\cal X}_{b,p}$, for some $b>1$ that will be specified later. 

Consider an interruption $T$ that occurs right before the contract of length $b^{jp+l}$ terminates, for some $j \in \mathbb{N}$, and $l \in [0, p-1]$. Then, from the definition of ${\cal X}_{b,p}$, a contract of length $b^{jp+l-f-1}$ has completed, even if $f$ processor faults have occurred. 
We have 
\[
\frac{T}{b^{jp+l-f-1}}=\frac{\sum_{i=0}^j b^{ip+l}}{b^{jp+l-f-1}} \leq \frac{b^{p+f+1}}{b^p-1}.
\] 

We require that each schedule $X_i$ in ${\cal X}_{b,p}$ is $r$-robust, which is enforced as long as
$\frac{b^{2p}}{b^p-1} \leq r$, or equivalently, $b^{p} \in [\zeta_{1,r}^{1/p}, \zeta_{2,r}^{1/p}]$. Summarizing, the acceleration ratio of the schedule $X$ is at most
\[
\frac{b^{p+f+1}}{b^p-1} \quad \textrm{subject to} \ b \in [\zeta_{1,r}^{1/p}, \zeta_{2,r}^{1/p}].
\]
Choosing a value of $b$ that optimizes the above expression will yield an optimal schedule, by directly comparing the expression to the lower bound given by~\eqref{eq:lb.robust.ft}. 
\end{proof}

\end{document}